\theoremstyle{definition}
\newtheorem{theorem}{Theorem}
\newtheorem{definition}{Definition}
\algrenewcommand\algorithmicrequire{\textbf{Precondition:}}
\algrenewcommand\algorithmicensure{\textbf{Postcondition:}}
\begin{document}
	
\title{Dynamic Power Allocation and User Scheduling for Power-Efficient and Low-Latency Communications}

\author{Minseok Choi,~\IEEEmembership{Member,~IEEE,}
	Joongheon Kim,~\IEEEmembership{Senior Member,~IEEE,}
	and~\\Jaekyun Moon,~\IEEEmembership{Fellow,~IEEE}
	\thanks{M. Choi and J. Moon are with the Department of Electrical Engineering, Korea Advanced Institute of Science and Technology (KAIST), Daejeon, Korea e-mails: ejaqmf@kaist.ac.kr, jmoon@kaist.edu.}
	\thanks{J. Kim is with the School of Computer Science and Engineering, Chung-Ang University, Seoul, Korea e-mail: joongheon@cau.ac.kr.}
}

\maketitle

\begin{abstract}
 
In this paper, we propose a joint dynamic power control and user pairing algorithm for power-efficient and low-latency hybrid multiple access systems.
In a hybrid multiple access system, user pairing determines whether the transmitter should serve a certain user by orthogonal multiple access (OMA) or non-orthogonal multiple access (NOMA). 
The proposed optimization framework minimizes the long-term time-average transmit power expenditure while reducing the queueing delay and satisfying time-average data rate requirements.
The proposed technique observes channel and queue state information and adjusts queue backlogs to avoid an excessive queueing delay by appropriate user pairing and power allocation.
Further, user scheduling for determining the activation of a given user link as well as flexible use of resources are captured in the proposed algorithm.
Data-intensive simulation results show that the proposed scheme guarantees an end-to-end delay smaller than 1 ms with high power-efficiency and high reliability, based on the short frame structure designed for ultra-reliable low-latency communications (URLLC).

\end{abstract}

\begin{IEEEkeywords}
Ultra-reliable low-latency communications, Internet of Things, Power-efficient communications, Hybrid multiple access, Power allocation, User scheduling
\end{IEEEkeywords}

\IEEEpeerreviewmaketitle

\section{Introduction}

The fifth-generation (5G) wireless networks are expected to offer high spectral efficiency, improved reliability, massive connectivity, and low end-to-end (E2E) latency \cite{5G:Andrews, 5G:Wunder}. 
Especially for realizing tactile Internet services, a very low E2E latency of 1\,ms should be guaranteed while providing reliable service quality \cite{Tactile:JSAC2016Simsek}.
In particular, International Telecommunication Union (ITU) has defined ultra-reliable low latency communication (URLLC) as one of usage scenarios in 5G networks \cite{ITU-5G}. 
With the proliferation of smart devices, particularly in the Internet of Things (IoT) network, URLLC should not only provide sufficiently high system throughput and low latency, but also support a massive scale of machine type communications \cite{URLLC:IEEENetwork2018Popovski,URLLC:arXiv2018Bennis}.
In addition, power-efficiency becomes critical especially for small and clumsy battery-powered IoT devices~\cite{PowerEff:JSAC2016Buzzi,EE:CommMag2014Hu,EE:IoTJ2015Pan}. 
Further, flexibility is also important to communicate with diverse machine type devices as well as human users while meeting a variety of quality of service (QoS) requirements \cite{CR:IoTJ2015Aijaz}.
Many researchers have studied a myriad of technical issues as mentioned above, and the works are in progress.

Actually, delay-constrained communication has long been a major challenge and interest. 
Given a delay constraint, the tradeoff between reliability and delay is studied in \cite{URLLC_reliab-delay:TC2018Nielsen}, and throughput analysis is also performed in \cite{URLLC_throughput:WoWMoM2016Chen}.
The packet delay can be reduced by designing a short frame structure \cite{URLLC_shortPacket:Proceeding2016Durisi,ShortPacket:VTC2015Kela} and/or adjusting the transmission policy \cite{URLLC_scheduling:TWC2017Zuo}.
The E2E delay consists of uplink (UL)/downlink (DL) transmission delays and queueing delay \cite{QueueDelay:VTC2016She}, and a short frame structure reduces UL/DL transmission durations.
Meanwhile, deterministic queueing delay analysis is generally considered difficult due to the fact that queue dynamics in medium access control (MAC) is influenced by the randomness of time-varying channels and stochastic geometry in physical (PHY) layer.
In \cite{QueueDelay:TWC2003Wu}, the \textit{effective capacity} link-layer model is presented to define the statistical delay requirement.
Based on the \textit{effective capacity} link model \cite{QueueDelay:TWC2003Wu}, cross-layer transmission design for achieving queueing delay requirements has been investigated under the assumption of a constant service rate and static channel over transmission time \cite{QueueDelay:GC2016She, QueueDelay:TWC2018She}.
Further, based on the Little's theorem \cite{Book:DataNetworks} which establishes that the time-average queueing delay is proportional to the average queue backlog, dynamic resource allocations and scheduling policies for reducing queueing delay have been actively pursued~\cite{URLLC_scheduling:TWC2017Zuo,delay-constrained:JSAC2015Khalek,delay-constrained:JSAC2018Choi,delay-constrained:arXiv2018Hsieh}.

Since there exists a fundamental power-delay tradeoff studied in \cite{TIT2002Berry,TIT2013Berry}, power efficiency also becomes critical for URLLC, especially where a massive number of devices are battery-powered \cite{PowerEff:GC2017Sun}.
Energy-efficient resource allocations and scheduling policies for delay-constrained communications have been studied in \cite{Infocom2003Fu,TWC2009Lee,WCNCW2013Ren,TIT2006Bettesh} and a delay-optimal scheduling policy for power-constrained transmission has been proposed in \cite{Infocom2003Goyal}.
In addition, system throughput maximization subject to a given constraint for low queueing delay is addressed in \cite{TWC2007Tang}.
Furthermore, the tradeoff between energy and delay to adapt to changes of network state distribution is discussed in~\cite{TIT2007Neely,Infocom2015Neely} based on a stochastic network optimization framework.

In IoT networks, it has been increasingly difficult for orthogonal multiple access (OMA), which allocates limited orthogonal resources to individual communication links, to handle the growing number of wireless devices.
In order to overcome this issue, non-orthogonal multiple access (NOMA) has been actively researched as one of promising methods for efficient and flexible use of energy and spectrum, as well as for system throughput improvements \cite{NOMA-5G:IEEE2017Liu, NOMA-5G:JSAC2017Ding}.
Power-multiplexing NOMA serves multiple users on the same time/frequency/code resources by employing the additional power domain \cite{NOMA_basic:Docomo-Saito}; thus it can provide better system throughput than OMA, employing successive interference cancellation (SIC) to remove superposed users' signal components \cite{Book:Tse}.
In addition, NOMA has the advantage of allowing massive connectivity for IoT services \cite{NOMA-IoT:CM2017Shirvan,NOMA-IoT:IEEE2016Ding} and NOMA in short packet transmissions for achieving low latency has been discussed in \cite{NOMA-shortpacket:IEEE2016Ding,NOMA-shortpacket:CL2018Yu}.

Since all users would not be served by NOMA due to high complexity of SIC, hybrid multiple access (MA), which allows the coexistence of NOMA and OMA, has been considered for next-generation communication systems. 
Representatively, multi-user superposition transmission (MUST) is adopted by the 3rd generation partnership project (3GPP) for 5G networks, which employs both power-domain NOMA and orthogonal frequency division multiple access (OFDMA) \cite{MuST:3GPP}.
For using hybrid MA, user scheduling and resource allocation are very critical issues. 
In \cite{NOMA-UserPairing:TVT2015Ding,NOMA-UserPairing:TC2017Liang,NOMA-UserPairing:WCL2018Kang}, user pairing schemes for NOMA signaling among multiple users have been studied.
The proposed schemes in \cite{NOMA-RM:TC2016Fang,NOMA-RM:TWC2016Di,NOMA-RM:TC2017Sun,NOMA-RM:JSAC2017Zhu} focus on joint optimization of sub-channel assignment and power allocation.
Further, cognitive-radio-inspired power control for NOMA is proposed in \cite{NOMA-UserPairing:TVT2015Ding,NOMA-PA:TWC2016Yang} to guarantee secondary users' QoS requirements.
However, power efficiency and low latency are not considered in \cite{NOMA-UserPairing:TVT2015Ding,NOMA-UserPairing:TC2017Liang,NOMA-UserPairing:WCL2018Kang,NOMA-RM:TWC2016Di,NOMA-RM:TC2017Sun,NOMA-RM:JSAC2017Zhu,NOMA-PA:TWC2016Yang}.
The authors of \cite{NOMA-RM:TC2016Fang} and \cite{NOMA-CrossLayer:JSAC2017Bao} proposed the power-efficient resource allocation policies, but they did not consider user pairing and latency problems. 

This paper proposes dynamic algorithms of joint user pairing and power control to maximize power-efficiency while achieving low latency as well as sufficient reliability in hybrid MA.
In particular, the long-term average data rate is considered as a user QoS requirement for sufficient reliability.
In addition, user scheduling and flexible use of resources are also captured in the proposed technique since the proposed optimization framework enables to determine whether the communication link is activated or not.
This paper shows that the proposed technique works well based on the short frame structure designed for URLLC.
The main contributions of the proposed technique can be summarized as follows:

\begin{itemize}
	\item This paper constructs the stochastic network optimization framework for the transmission scheme of URLLC, which adaptively operates depending on time-varying channel and queue states.
	The proposed framework focuses on reducing the queueing delay, which is a main factor of E2E delay.
	
	\item This paper contributes to URLLC based on NOMA. 
	The proposed transmission scheme exploits the advantage of NOMA over OMA to increase data rate for reducing the queueing delay. 
	Further, flexible use of resources is enabled for both OMA and NOMA users in the proposed framework.
	
	\item Different from the power-efficient methods of the existing works \cite{NOMA-RM:TC2016Fang,NOMA-CrossLayer:JSAC2017Bao}, the proposed resource allocation and user scheduling not only maximize the power efficiency, but also guarantee the limited queueing delay and the sufficiently large time-average data rate.
	
	\item Data-intensive simulation results show that the proposed technology can achieve an E2E delay smaller than 1\,ms, on the basis of the short packet structure designed for URLLC \cite{ShortPacket:VTC2015Kela}.
	
\end{itemize}

In summary, the proposed algorithm pursues low latency, high power-efficiency, as well as diverse QoS requirement satisfactions.

The rest of the paper is organized as follows. 
The hybrid MA system and queue model are described in Section \ref{sec:NetworkModel}.
In Section \ref{sec:joint_opt_prob}, we formulate the joint optimization problem of user pairing and power allocation in hybrid MA.
The optimal power allocation rule with fixed user pairing is proposed in Section \ref{sec:PA_NOMA}, and the matching algorithm for user pairing is presented in Section \ref{sec:user_pairing}.
Simulation results are shown in Section \ref{sec:simulation}, and Section \ref{sec:conclusion} concludes the paper.

\section{System Model}
\label{sec:NetworkModel}

\subsection{Hybrid Multiple Access Model}

This paper considers hybrid MA for power-efficient IoT networks where transmitters are battery-powered.
Let a transmitter serve $N$ users by either OMA or NOMA, as shown in Figs. \ref{Fig:System_Model_OMA} and \ref{Fig:System_Model_NOMA}.
The transmitter is deployed with $N$ queues in which data packets are waiting for transmissions to $N$ respective users.
Assume that data packets for user $n$ are accumulated in queue $n$.
Each transmitter queue has a power budget of $P_0$, and the transmit power for user $n$ is $P_n$, so $0\leq P_n \leq P_0$.

The Rayleigh fading channel is assumed for communication links from the transmitter to users. 
Denote the channel of user $n$ with $h_n$. 
The path loss model is $35.3+37.6 \ln(d_n)$, where $d_n$ is the distance between the transmitter and user $n$, and the fast fading component has a complex Gaussian distribution, i.e., $CN(0,1)$.
Let $R_n$ be the data rate of user $n$, and denote $\rho_n$ as a threshold of user $n$'s instantaneous data rate that determines the outage event.
In other words, when $R_n < \rho_n$, the outage event occurs at user $n$. 
In addition, $\eta_n$ represents the long-term average data rate as a QoS requirement for user $n$, and the QoS constraint can be written as
\begin{equation}
\eta_n \leq \lim_{T\rightarrow \infty} \frac{1}{T} \sum_{t=1}^{T} R_n(t).
\end{equation}

\begin{figure}[t]
	\minipage{0.45\textwidth}
	\includegraphics[width=\linewidth]{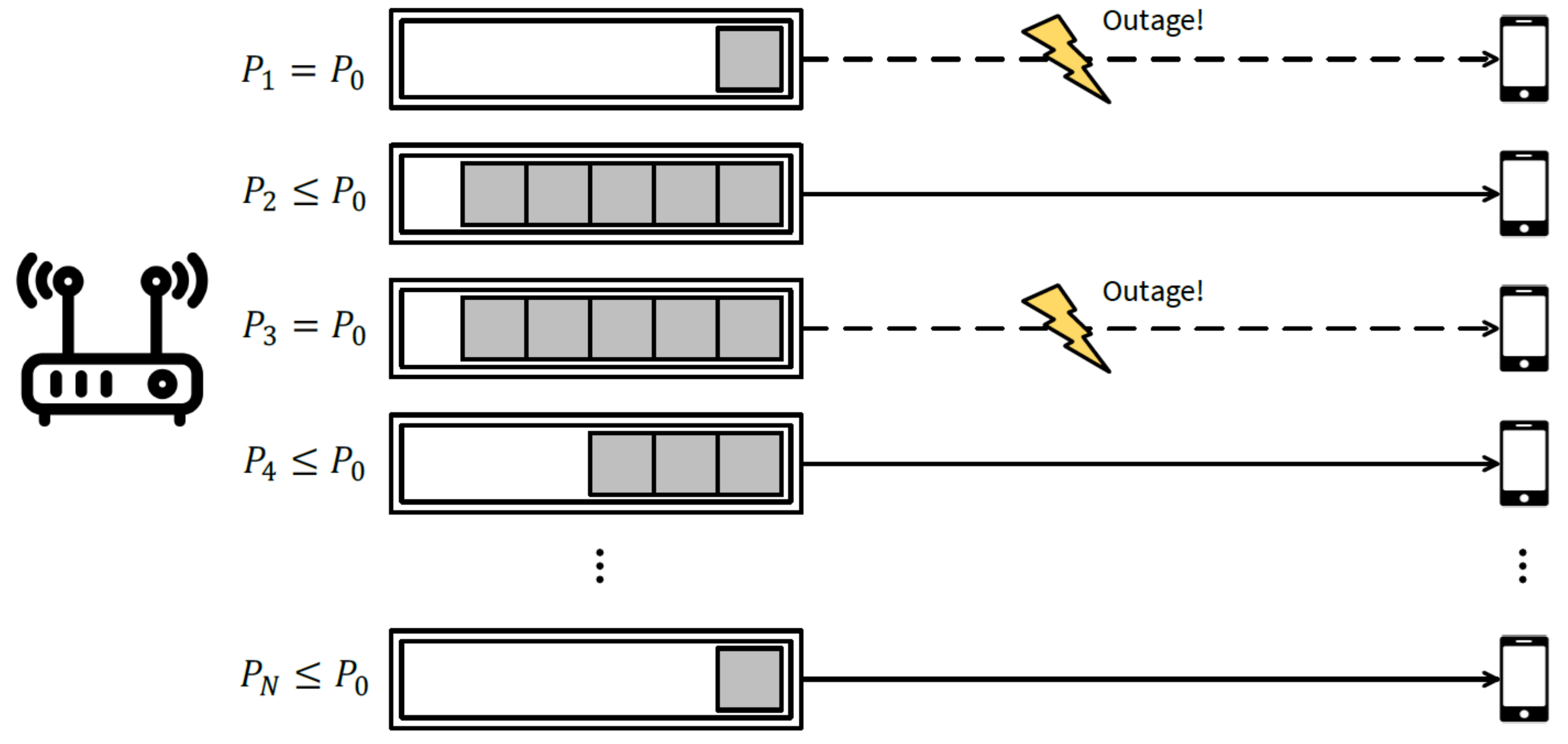}
	\caption{System architecture for OMA} 
	\label{Fig:System_Model_OMA}
	\endminipage\hfill
	\minipage{0.5\textwidth}
	\includegraphics[width=\linewidth]{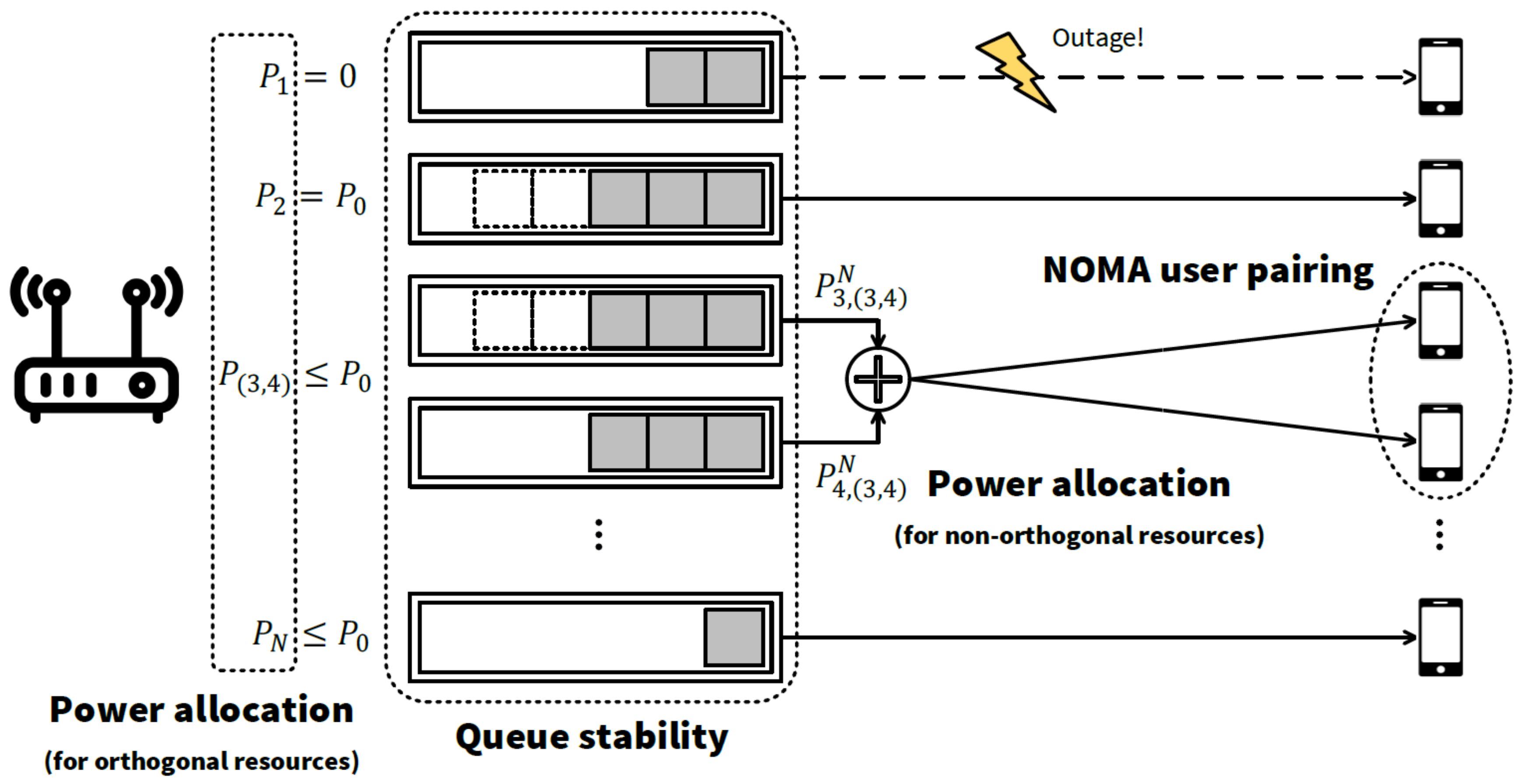}
	\caption{System architecture for hybrid MA}
	\label{Fig:System_Model_NOMA}
	\endminipage
\end{figure}

Fig. \ref{Fig:System_Model_OMA} shows the system architecture employing only OMA. 
For example, many packets are in queue 2 and queue 3, so these links have a risk of excessive queueing delays.
The transmitter can reduce queue backlogs by increasing transmission rates of these links.
Simply, link 2 can consume more power to increase its transmission rate and to reduce queue backlogs.
On the other hand, link 3 experiences the outage event even with the maximum transmit power in Fig. \ref{Fig:System_Model_OMA}, so NOMA can be employed.
Since NOMA is well-known to improve system throughput compared to OMA with identical power consumption, the link rate for user 3 can be increased by NOMA.
In Fig. \ref{Fig:System_Model_NOMA}, link 3 and link 4 are paired for NOMA transmission.
Meanwhile, the link outage occurs at user 1, but its queue is almost empty so it does not worry about the excessive queueing delay.
Since we suppose that the transmitter observes the current CSI and QSI, the outage occurrence can be expected.
In this case, the transmitter can allow link 1 to be deactivated and to save transmit power, as shown in Fig. \ref{Fig:System_Model_NOMA}. 
In this way, the system dynamically adjusts transmission rates of all communication links by controlling power consumption and employing NOMA, for low queueing delay and high power-efficiency.

Since this model determines link activation by allocating no transmit power, as shown by link 1 in Fig. \ref{Fig:System_Model_NOMA}, we can say that user scheduling is also performed in the system model.
Further, when two users are paired for NOMA signaling, the transmitter even enables to allocate no power to one of paired users.
In this case, the resource of the user with no transmit power becomes available for another user. 
It is easily seen that system resources are used more flexibly in this model.

Main issues of hybrid MA are summarized in Fig. \ref{Fig:System_Model_NOMA}.
First, the queueing delay should be reduced to satisfy the E2E latency constraint, by achieving stability of the queueing system, i.e., limiting queue backlogs.
Then, the user pairing problem arises for the transmitter to serve several users by NOMA.
In addition, power allocation for both OMA and NOMA users should be jointly considered with the user pairing problem.
In this paper, a two-user NOMA scenario is only considered, because a clumsy device in the IoT network is difficult to handle the high computational complexity of SIC processes for the multi-user NOMA scenario.
Further, as the number of users for NOMA signaling grows up, the power budget should be large enough to provide reliable signal-to-interference-plus-noise ratios (SINRs) to NOMA users. 
However, a small battery-powered device is likely to have a limited power budget.

\subsection{Transmitter Queue Model}
\label{subsec:queue_model}

In general, the transmitter queue model has its own arrival and departure processes. 
When the departures are less frequent than the arrivals, the queue backlog grows. 
For each user $n\in \{1,\cdots,N \}$, the queue dynamics in each unit time $t \in \{0,1,\cdots, \}$ can be represented as follows:
\begin{align}
Q_n[t+1] &= \max \{Q_n[t]-\mu_n[t], 0\} + \lambda_n[t] \\
Q_n[0] &= 0,
\end{align}
where $Q_n[t]$, $\lambda_n[t]$, and $\mu_n[t]$ stand for the queue backlog, the arrival and departure proesses of user $n$ at time $t$, respectively.
The queue states are updated in each time slot $t$. 
In this paper, the interval of each slot is assumed to be the channel coherence time, $\tau_c$. 

In this paper, queue backlog $Q_n[t]$ counts the number of data bits accumulated in queue $n$. 
$\lambda_n[t]$ and $\mu_n[t]$ semantically mean the numbers of arrived and transmitted bits.
Simply, suppose that $\lambda_n[t]$ is randomly generated for all $n\in \{1,\cdots,N\}$.
On the other hand, $\mu_n[t]$ obviously depends on the data rate of user $n$:

\begin{align}
\lambda_n[t] &= a_n[t] \cdot u \\
\mu_n[t] &= \mathcal{I}\{R_n(P_n,\Psi_n,t) \geq \rho_n \} \cdot R_n(P_n,\Psi_n,t) \cdot \tau_c
\end{align}
where $a_n[t]$ is an i.i.d. uniform random variable, i.e., $a_n[t] \sim \mathcal{U}\{\lambda_{\text{min}}, \lambda_{\text{max}} \}$, indicating the number of data packets arrived in queue $n$ at time $t$. 
Also, $u$ is the packet size in bits, and $\Psi_n$ represents the index of the user paired with user $n$. 
If OMA is employed for user $n$, then $\Psi_n = n$, whereas $\Psi_n = m$ for $m \neq n$ means that users $n$ and $m$ are paired for NOMA.
$R_n(P_n,\Psi_n,t)$ is the data rate of user $n$ when transmit power $P_n$ is consumed and user $n$ is paired with user $\Psi_n$ at time $t$. 
$\mathcal{I}(.)$ is the indicator function, so $\mathcal{I}\{R_n(P_n,\Psi_n,t) \geq \rho_n \}$ is 0 if the outage occurs at user $n$, or 1, otherwise.
Since we suppose that the transmitter can observe the current CSI, if the outage is expected, no transmit power is allocated and the departure becomes zero.

\textit{Remark}: If $\tau_c$ is too long, it is better to update power allocation and user pairing more frequently than channel variations. 
Consider a transmitter queue which is almost empty so that there is no worry about excessive queueing delay.
In this case, the transmitter usually consumes a small power to improve power efficiency.
However, if this situation persists for a long time $\tau_c$, packets will be accumulated in the queue sooner or later and queueing delay will increase. 
Therefore, several updates of power allocation and user pairing are required over the time interval of $\tau_c$.

\subsection{E2E Delay Requirement}
\label{subsec:E2E_delay}

Denote the E2E delay bound with $D_{\text{max}}$. 
$D_{\text{max}}$ mainly consists of UL/DL transmission delays and the queueing delay \cite{QueueDelay:VTC2016She}.
For low latency communications, a small packet structure is preferred because the UL/DL transmission durations can be reduced, and the summation of UL/DL durations becomes identical to the transmit time interval (TTI), denoted by $T_t$ \cite{ShortPacket:VTC2015Kela}.
Therefore, the margin of the queueing delay is $D_{\text{max}}^q = D_{\text{max}} - T_t$, i.e., data transmission is successful only when the queueing delay is smaller than $D_{\text{max}}^q$.
Although the instantaneous queueing delay determines whether data transmission is successful or not, making the instantaneous queueing delay bounded to a deterministic value is very difficult due to the time-varying transmission scheme and channel.

To this end, this paper focuses on limiting the time-average queueing delay.
According to the Little's theorem \cite{Book:DataNetworks}, low time-average queueing delay can be achieved by reducing the expected value of queue backlogs. 
This paper introduces the concept of strong stability of a queue to make queue backlogs bounded, as follows:
\begin{equation}
\lim_{T \rightarrow \infty } \frac{1}{T} \sum_{t=0}^{T} Q[t] < \infty.
\label{eq:queue_stability}
\end{equation}
Simulation results in section \ref{sec:simulation} show that the queueing delay can be reduced by ensuring (\ref{eq:queue_stability}), i.e., strong stability of the queueing system.

\section{Joint Optimization Problem Formulation of User pairing and Power Allocation in Hybrid Multiple Access}
\label{sec:joint_opt_prob}

This paper pursues both high power efficiency and low queueing delay. 
In addition, the long-term average data rates are considered as one of the QoS requirements.
Specifically, the transmit power of $P_n$ depends on whether the transmitter serves user $n$ by OMA or NOMA and which user is paired with user $n$ for NOMA. 
Thus, we can formulate the joint optimization problem to find the optimal power allocation and user pairing:
\begin{align}
\{ \mathbf{P}^*(\boldsymbol{\Psi}^*(t),t), \boldsymbol{\Psi}^*(t) \} = ~&\underset{\mathbf{P},\boldsymbol{\Psi}}{\arg\min} ~\sum_{n\in \mathcal{N}} \mathbb{E}[P_n(\Psi_n(t),t)] \label{eq:OP1} \\
\text{s.t.}~~ &\lim_{t \rightarrow \infty} \frac{1}{t}\sum_{t'=0}^t \mathbb{E}[Q_n(t')] < \infty,~\forall n \in \mathcal{N} \label{eq:OP1_QueueConst} \\
&\lim_{t \rightarrow \infty} \frac{1}{t}\sum_{t'=0}^{t-1} \mathbb{E} [\tilde{R}_l(P_l(\Psi_l(t'),t'),\Psi_l(t'),t')] \geq \eta_l,~\forall l \in \mathcal{N}_{s} \subseteq \mathcal{N} \label{eq:OP1_RateConst} \\
&0 \leq P_n(\Psi_n(t),t) \leq P_0,~\forall n \in \mathcal{N}, \label{eq:OP1_Powerconst}
\end{align}
where $\tilde{R}_n(.) = \mathcal{I}\{R_n(.) \geq \rho_n \} \cdot R_n(.)$,
$\mathcal{N}=\{1,\cdots,N \}$ is the user index set, and $\mathcal{N}_{s}$ is the subset of $\mathcal{N}$.
$\mathbf{P}^*(\boldsymbol{\Psi}^*(t),t)$ and $\boldsymbol{\Psi}^*(t)$ denote the column vectors of $P_n^*(\Psi^*_n(t),t)$ and $\Psi_n^*(t)$ for all $n\in \mathcal{N}$, respectively. 
Note that the transmit power $P_n$ depends on user pairing $\Psi_n$ and time $t$.
The constraint (\ref{eq:OP1_QueueConst}) represents strong stability of the queueing system, which makes queue backlogs bounded.
In addition, sufficient time-average data rates of $\eta_l$ for user $l$ for $l\in \mathcal{N}_s$ are guaranteed as one of QoS by the constraint (\ref{eq:OP1_RateConst}).
The power budget of $P_0$ is independently assumed for each communication link, so the constraint (\ref{eq:OP1_Powerconst}) is given.
For simplicity, $P_n(\Psi_n(t),t)$ will be written simply as $P_n(t)$.

The problem (\ref{eq:OP1})-(\ref{eq:OP1_Powerconst}) can be solved by the theory of Lyapunov optimization \cite{Book:Neely}. 
We first transform the inequality constraint (\ref{eq:OP1_RateConst}) into the form of queue stability.
Specifically, define the virtual queue $Z_l(t)$ for $l\in \mathcal{N}_s$, with the update equation:
\begin{equation}
Z_l(t+1) = \max \{ Z_l(t) + \eta_l - \tilde{R}_l(P_l(t), \Psi_l (t), t), 0 \}.
\end{equation}
The strong stability of the virtual queue $Z_l(t)$ pushes the average of $\tilde{R}_l(P_l(t),\Psi_l(t),t)$ to be close to the QoS guarantee $\eta_l$.

Let $\mathbf{Q}(t)$ and $\mathbf{Z}(t)$ denote the column vectors of $Q_n(t)$ and $Z_l(t)$ for $n\in \mathcal{N}$ and $l \in \mathcal{N}_s$ at time $t$, respectively, and let $\mathbf{\Theta}(t) = [\mathbf{Q}(t)^T, \mathbf{Z}(t)^T]^T$ be a concatenated vector of actual and virtual queue backlogs.
Define the quadratic Lyapunov function $L[\mathbf{\Theta}(t)]$ as follows:
\begin{equation}
L[\mathbf{\Theta}(t)] = \frac{1}{2} \sum_{n\in \mathcal{N}} Q_n(t)^2 + \frac{1}{2} \sum_{l \in \mathcal{N}_s} Z_l(t)^2.
\end{equation}
Then, let $\Delta (t)$ be a conditional quadratic Lyapunov function that can be formulated as $\mathbb{E}[L[\mathbf{\Theta}(t+1)] - L[\mathbf{\Theta}(t)]|\mathbf{\Theta}(t)]$, i.e., the drift on $t$. 
The dynamic policy is designed to solve the given optimization problem (\ref{eq:OP1})-(\ref{eq:OP1_Powerconst}) by observing the current queue state, $\boldsymbol{\Theta}(t)$, and determining power allocation $\mathbf{P}(t)$ and user pairing $\boldsymbol{\Psi}(t)$ to minimize a upper bound on \textit{drift-plus-penalty} \cite{Book:Neely}:
\begin{equation}
\Delta(\mathbf{\Theta}(t)) + V \mathbb{E}\bigg[ \sum_{n\in \mathcal{N}} P_n(t) \Big| \mathbf{\Theta}(t) \bigg].
\label{eq:drift-plus-penalty}
\end{equation}

First, find the upper bound on the change in the Lyapunov function.
\begin{align}
L[\boldsymbol{\Theta}(t+1)] - L[\boldsymbol{\Theta}(t)] &= \frac{1}{2} \sum_{n\in \mathcal{N}} \Big[ Q_n(t+1)^2 - Q_n(t)^2 \Big] + \frac{1}{2} \sum_{l\in \mathcal{N}_s} \Big[ Z_l(t+1)^2 - Z_l(t)^2 \Big] \\
&\leq \frac{1}{2} \sum_{n\in \mathcal{N}} [ \lambda_n(t)^2 + \mu_n(t)^2 ] + \frac{1}{2} \sum_{l \in \mathcal{N}_s} (\eta_l - \tilde{R}_l(P_l(t),\Psi_l(t),t))^2 \nonumber \\ 
&~~~+ \sum_{n\in \mathcal{N}} Q_n(t) (\lambda_n(t) - \mu_n(t))
+ \sum_{l\in \mathcal{N}_s} Z_l(t) (\eta_l - \tilde{R}_l (P_l(t),\Psi_l(t),t)).
\end{align}
Then, the upper bound on the conditional Lyapunov drift is given by
\begin{align}
\Delta(\Theta(t)) &\leq C + \sum_{n\in \mathcal{N}} \mathbb{E} \Big[ Q_n(t) (\lambda_n(t) - \mu_n(t)) \Big]
+ \sum_{l\in \mathcal{N}_s} \Big[ Z_l(t) (\eta_l - \tilde{R}_l (P_l(t),\Psi_l(t),t)) \Big].
\end{align}
where we assume that departure and arrival rates are bounded, and $C$ is a constant such that $\frac{1}{2} \sum_{n\in \mathcal{N}} \mathbb{E} [ \lambda_n(t)^2 + \mu_n(t)^2 ] + \frac{1}{2} \sum_{l \in \mathcal{N}_s} \mathbb{E}[ (\eta_l - \tilde{R}_l(P_l(t),\Psi_l(t),t))^2 ] \leq C$.
According to (\ref{eq:drift-plus-penalty}), minimizing a bound on \textit{drift-plus-penalty} is consistent with minimizing 
\begin{equation}
\mathbb{E}\Bigg[V\sum_{n \in \mathcal{N}} P_n(t) - \sum_{n\in \mathcal{N}} Q_n(t)\mu_n(t) - \sum_{\l \in \mathcal{N}_s} Z_l(t) \tilde{R}_l(P_l(t),\Psi_l(t), t) \Big| \boldsymbol{\Theta}(t) \Bigg],
\label{eq:expected-drift-plus-penalty}
\end{equation}
because $\lambda_n(t)$ is not controllable and all values of $\eta_l$ for $l\in \mathcal{N}_s$ are constants.

We now use the concept of opportunistically minimizing the expectations and specifically go after the following drift-plus-penalty problem:
\begin{align}
\{ \mathbf{P}^*(t), \boldsymbol{\Psi}^*(t) \} = ~&\underset{\mathbf{P},\boldsymbol{\Psi}}{\arg\min} ~\mathcal{M}(\mathbf{P}(t),\boldsymbol{\Psi}(t)) \label{eq:OP2} \\
\text{s.t.}~~&0 \leq P_n(t) \leq P_0,~\forall n \in \mathcal{N}, \label{eq:OP2_const}
\end{align}
where 
\begin{align}
\mathcal{M}(\mathbf{P}(t),\boldsymbol{\Psi}(t)) &= \sum_{n \in \mathcal{N}} \mathcal{M}_n(P_n(t),\Psi_n(t)) \\
&= V\sum_{n \in \mathcal{N}} P_n(t) - \sum_{n\in \mathcal{N}} Q_n(t)\mu_n(t) - \sum_{\l \in \mathcal{N}_s} Z_l(t) \tilde{R}_l(P_l(t),\Psi_l(t), t).
\label{eq:Lyapunov_metric}
\end{align}

Since there are so many possible combinations of user pairing, it is very difficult to exhaustively minimize the optimization metric of (\ref{eq:Lyapunov_metric}).
Therefore, we first find the optimal power allocation depending on the fixed user pairing policy. 
Then, several pairs of two users are generated for NOMA to minimize the optimization metric of (\ref{eq:Lyapunov_metric}), based on the matching theory.

\section{Optimal Power Allocation for Hybrid MA}

For simplicity, notations for the dependency of all parameters on $t$ are omitted in this section, because the optimal power allocation depends only on CSI and QSI at current time $t$.
Therefore, $R_n(P_n(t),\Psi_n(t),t) = R_n(P_n,\Psi_n)$, in this section.

\subsection{Optimal power allocation for OMA}

First, the power allocation policy for OMA users is presented.
The data rate of user $n$ which employs OMA is given by
\begin{equation}
R_n(P_n,n) = \frac{\Phi \mathcal{B}}{N} \log_2 \Big( 1+ N \Gamma_n P_n \Big),
\label{eq:OMA_rate}
\end{equation}
where $\Gamma_n = \frac{|h_n|^2}{\mathcal{B}N_0}$, $N_0$ is single-sided noise spectral density, and $\mathcal{B}$ is bandwidth. 
$\Phi \in (0,1]$ represents the degradation coefficient of channel capacity due to finite blocklength codes appropriate for the short packet structure \cite{QueueDelay:VTC2016She,finiteblock:TIT2010Polyanskiy}. 
We assume that the bandwidth is equally allocated to $N$ users.
Note that $\Psi_n = n$ for all $n \in \mathcal{N}$ in OMA.
The power interval for avoiding the outage, i.e. $R_n(P_n,n) \geq \rho_n$, can be obtained, as written by
\begin{equation}
P_n \geq P^O_{\text{th}} = \frac{2^{N\rho_n/\Phi\mathcal{B}}-1}{N \Gamma_n}. \label{eq:OMA_outage_power}
\end{equation}
If $P_n \leq P^O_{\text{th}}$, then $R_n(P_n,n) = 0$.

\textit{Remark}: Since Shannon capacity assumes channel codes of infinite length, it is not appropriate to directly apply Shannon capacity to low latency communications with the short packet structure.
The authors of \cite{finiteblock:TIT2010Polyanskiy,finiteblock:TIT2014Yang} obtained channel capacity with finite blocklength codes in a variety of channel models.
Strictly, the capacity with finite blocklength codes has a different form from (\ref{eq:OMA_rate}); 
however, we approximate the data rate by weighting the degradation factor to Shannon capacity in a similar way to that in \cite{QueueDelay:VTC2016She}, and $\Phi=0.9$ is assumed in this paper.

When the transmitter serves all of $N$ users by OMA, each user's data rate $R_n(P_n,n)$ is independent of each other, so the optimization problem (\ref{eq:OP2})-(\ref{eq:OP2_const}) can be solved by independently minimizing $\mathcal{M}_n(P_n,\Psi_n)$ for all $n \in \mathcal{N}$.
When $\Psi_n = n$, let $\mathcal{M}^O_n(P_n) = \mathcal{M}_n(P_n,\Psi_n)$, $\forall n \in \mathcal{N}$.
Therefore, the optimization problem (\ref{eq:OP2})-(\ref{eq:OP2_const}) can be transformed into 
\begin{align}
P^O_n = ~&\underset{P_n}{\arg\min}~\mathcal{M}^O_n(P_n) \label{eq:OP_PA_OMA} \\
\text{s.t.}~~&0 \leq P_n \leq P_0, \label{eq:OP_PA_OMA_const}
\end{align}
for all $n\in\mathcal{N}$ in OMA system, where 
\begin{align}
\mathcal{M}^O_n(P_n) &= V \cdot P_n - Q_n \cdot \tilde{R}_n(P_n,n) \cdot \tau - Z_n \cdot \tilde{R}_n(P_n,n) \cdot \mathcal{I} \{ n \in \mathcal{N}_s \}.
\label{eq:OP_PA_OMA_metric}
\end{align}

Theorem 1 provides the solution of the optimization problem (\ref{eq:OP_PA_OMA})-(\ref{eq:OP_PA_OMA_const}).
\begin{theorem}
	
	\label{thm1}
	The optimal power allocation of the problem (\ref{eq:OP_PA_OMA})-(\ref{eq:OP_PA_OMA_const}) is given by
	\begin{equation}
		P_n^O = 
		\begin{cases}
		P_0,~~& \text{if}~ P^O_{\text{th}} \leq P_0 \leq P_n^* ~\&~ \mathcal{M}_n(P_0) < 0 \\
		P_n^*,~~&\text{else if}~ P^O_{\text{th}} \leq P_n^* < P_0 \\
		P_{\text{th}}^O,~~&\text{else if}~ P_n^* < P_{\text{th}}^O \leq P_0 ~\&~ \mathcal{M}_n (P_{\text{th}}^O) < 0 \\
		0,~~&\text{otherwise}
		\end{cases} \label{eq:thm1},
	\end{equation}
	where $P_n^* = \frac{\Phi \mathcal{B}(\tau Q_n + Z_n \cdot \mathcal{I}\{ n \in \mathcal{N}_s \})}{NV\ln 2} - \frac{1}{N \Gamma_n}$.
	
\end{theorem}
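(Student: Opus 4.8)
The plan is to minimize $\mathcal{M}_n^O$ over $[0,P_0]$ by splitting this interval at the outage threshold $P_{\text{th}}^O$ of (\ref{eq:OMA_outage_power}) and keeping the better of the two candidate minima. On the low-power sub-interval $[0,P_{\text{th}}^O)$ we have $\tilde R_n(P_n,n)=0$, so by (\ref{eq:OP_PA_OMA_metric}) the objective reduces to $\mathcal{M}_n^O(P_n)=V P_n$, which is increasing and attains the value $0$ at the interior point $P_n=0$. On the high-power sub-interval $[P_{\text{th}}^O,P_0]$, nonempty exactly when $P_{\text{th}}^O\le P_0$, we have $\tilde R_n(P_n,n)=R_n(P_n,n)=\tfrac{\Phi\mathcal{B}}{N}\log_2(1+N\Gamma_n P_n)$, so, writing $G_n:=\tau Q_n+Z_n\,\mathcal{I}\{n\in\mathcal{N}_s\}$,
\[
\mathcal{M}_n^O(P_n)=V P_n-G_n\frac{\Phi\mathcal{B}}{N}\log_2(1+N\Gamma_n P_n),
\]
which is convex in $P_n$ (strictly so whenever $G_n>0$) since $-\log_2(\cdot)$ is convex and the leading term is affine. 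Setting the derivative to zero gives $1+N\Gamma_n P_n=\tfrac{G_n\Phi\mathcal{B}\Gamma_n}{V\ln 2}$, i.e. the unconstrained stationary point is precisely $P_n^{*}=\frac{\Phi\mathcal{B}(\tau Q_n+Z_n\mathcal{I}\{n\in\mathcal{N}_s\})}{NV\ln 2}-\frac{1}{N\Gamma_n}$ from the statement. (Here $\mathcal{M}_n^O(\cdot)$ is the metric of (\ref{eq:OP_PA_OMA_metric}), written with a single argument as $\mathcal{M}_n(\cdot)$ in the theorem.)

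By convexity, the minimizer of $\mathcal{M}_n^O$ over $[P_{\text{th}}^O,P_0]$ is the projection of $P_n^{*}$ onto that interval: it equals $P_n^{*}$ when $P_{\text{th}}^O\le P_n^{*}\le P_0$; it equals $P_0$ when $P_n^{*}>P_0$ (the function is then decreasing on the interval); and it equals $P_{\text{th}}^O$ when $P_n^{*}<P_{\text{th}}^O$ (the function is then increasing on the interval). It then remains to compare the resulting active minimum with the competitor $\mathcal{M}_n^O(0)=0$. The one calculation I would actually carry out is the value at the interior optimum: substituting $x:=1+N\Gamma_n P_n^{*}=\tfrac{G_n\Phi\mathcal{B}\Gamma_n}{V\ln 2}$ and simplifying yields $\mathcal{M}_n^O(P_n^{*})=\tfrac{V}{N\Gamma_n}\bigl(x-1-x\ln x\bigr)$. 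Since $g(x):=x-1-x\ln x$ satisfies $g(1)=0$ and $g'(x)=-\ln x<0$ for $x>1$, we get $\mathcal{M}_n^O(P_n^{*})<0$ whenever $P_n^{*}>0$. This is exactly why the second branch of (\ref{eq:thm1}) needs no extra sign test: if $P_{\text{th}}^O\le P_n^{*}<P_0$ then $P_n^{*}\ge P_{\text{th}}^O\ge 0$, hence $\mathcal{M}_n^O(P_n^{*})\le 0=\mathcal{M}_n^O(0)$ and $P_n^O=P_n^{*}$ (the degenerate case $P_n^{*}=0$ simply collapses into the $P_n^O=0$ branch and causes no inconsistency).

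In the remaining two cases the active minimizer is an endpoint rather than the unconstrained optimum, so its value need not be negative and an explicit comparison with $0$ is unavoidable: when $P_{\text{th}}^O\le P_0\le P_n^{*}$ the active minimum is $\mathcal{M}_n^O(P_0)$, giving $P_n^O=P_0$ if $\mathcal{M}_n^O(P_0)<0$ and $P_n^O=0$ otherwise; when $P_n^{*}<P_{\text{th}}^O\le P_0$ the active minimum is $\mathcal{M}_n^O(P_{\text{th}}^O)$, giving $P_n^O=P_{\text{th}}^O$ if $\mathcal{M}_n^O(P_{\text{th}}^O)<0$ and $P_n^O=0$ otherwise; and when $P_{\text{th}}^O>P_0$ the high-power piece is empty, the feasible set lies in $[0,P_{\text{th}}^O)$, and $P_n^O=0$. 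Collecting these mutually exclusive situations reproduces (\ref{eq:thm1}). I expect the main difficulty to be bookkeeping rather than any deep estimate: one must handle the jump discontinuity of $\tilde R_n(\cdot,n)$ at $P_{\text{th}}^O$ correctly (its value there is $\rho_n$, not $0$, since $R_n(P_{\text{th}}^O,n)=\rho_n$), and verify that the boundary configurations $P_n^{*}\in\{0\}$, $P_n^{*}=P_{\text{th}}^O$, $P_n^{*}=P_0$ are each routed into a single branch so the four cases tile the parameter space with no overlap and no gap; the inequality $g(x)<0$ for $x>1$ is the one non-mechanical ingredient, and it is precisely what lets the statement omit a sign condition in its second branch.
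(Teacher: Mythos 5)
Your proposal is correct and follows essentially the same route as the paper's proof: split $[0,P_0]$ at the outage threshold $P_{\text{th}}^O$, use convexity of $\mathcal{M}_n^O$ on the no-outage piece to reduce the minimizer there to the projection of $P_n^*$ onto $[P_{\text{th}}^O,P_0]$, and compare the result against the zero-power value $\mathcal{M}_n^O(0)=0$. The one place you go beyond the paper is the explicit computation $\mathcal{M}_n^O(P_n^*)=\tfrac{V}{N\Gamma_n}\bigl(x-1-x\ln x\bigr)<0$ for $x>1$, which justifies why the $P_n^*$ branch of (\ref{eq:thm1}) carries no sign test while the $P_0$ and $P_{\text{th}}^O$ branches do --- a detail the paper asserts without verification --- so your argument is, if anything, slightly more complete.
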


\begin{proof}
	
	Assume that $P_n \geq P_{\text{th}}^O$, i.e., the outage event does not occur.
	Then, differentiating (\ref{eq:OP_PA_OMA_metric}) by $P_n$,
	\begin{equation}
	\frac{\mathrm d \mathcal M_n^O}{\mathrm d P_n} = V - \frac{\Phi\mathcal{B}(\tau Q_n + \tilde{Z}_n)}{\ln 2} \cdot \frac{\Gamma_n}{N \Gamma_n P_n + 1},
	\end{equation}
	where $\tilde{Z}_n = Z_n \cdot \mathcal{I}\{ n \in \mathcal{N}_s \}$.
	and
	the local minimizer $P_n^*$ is obtained from $\frac{\mathrm d \mathcal M_n^O}{\mathrm d P_n}=0$, i.e.,
	\begin{equation}
	P_n^* = \frac{\Phi\mathcal{B}(\tau Q_n + \tilde{Z}_n)}{NV\ln 2} - \frac{1}{N \Gamma_n}.
	\end{equation}
	Further, $P_n^*$ is shown to be the global minimizer in the region of $P_n \geq P_{\text{th}}^O$ by
	\begin{equation}
	\frac{\mathrm d^2 \mathcal M_n^O}{\mathrm d P_n^2} = \frac{N\Gamma_n^2 \Phi \mathcal{B} (\tau Q_n + \tilde{Z}_n)}{(N\Gamma_n P_n + 1)^2 \ln 2} > 0.
	\end{equation}
	However, when $P_n < P_{\text{th}}^O$, $\mathcal{M}_n^O(P_n) = V\cdot P_n$, so $P_n=0$ is the minimizer and $\mathcal{M}_n^O(P_n) = 0$.
	Therefore, if $P_{\text{th}}^O > P_0$, $P_n^O = 0$ always.
	Otherwise, i.e. when $P_{\text{th}}^O \leq P_0$, the relative value of $P_n^*$ to $P_{\text{th}}^O$ and $P_0$ determines $P_n^O$.
	
	When $P_{\text{th}}^O \leq P_n^* < P_0$, $P_n^*$ is still the global minimizer.
	However, when $P_0 \leq P_n^*$, the minimizer in the interval of $[P_{th}^O,P_0]$ becomes $P_0$, and still $P_n=0$ is the minimizer in $[0, P_{th}^O]$. 
	Therefore, if $\mathcal{M}_n^O(P_0) < 0$, $P_0$ is the global minimizer in $[0,P_0]$.
	Otherwise, $P_n^O=0$, i.e., no power is allocated to user $n$.
	
	In the case of $P_n^* < P_{\text{th}}^O$, the global minimizer is in the outage region.
	Then, $P_{\text{th}}^O$ is the minimizer in the interval of $[P_{\text{th}}^O,P_0]$.
	Thus, if $\mathcal{M}_n^O(P_{\text{th}}^O) < 0$, $P_n^O = P_{\text{th}}^O$ becomes the minimizer in $[0,P_0]$, and if not, $P_n^O=n$ is the solution.
	Finally, (\ref{eq:thm1}) is obtained.
	
\end{proof}

\textit{Remark}: 
As we mentioned earlier, when the outage is expected at user $n$, the transmitter can save the power, i.e., $P_n = 0$.
Further, when queue backlogs of $Q_n$ and $Z_n$ are small so the second and third terms of (\ref{eq:OP_PA_OMA_metric}) are small compared to the system parameter $V$, the transmitter cannot schedule the link of user $n$ to save the power, even though the link is not in outage. 
In this way, link activation is determined by power allocation, so it can be said that user scheduling is also performed.

\subsection{Optimal Power Allocation for Two-user NOMA}
\label{sec:PA_NOMA}

In this section, the optimal power allocation is obtained for a given NOMA pair of user $i$ and user $j$, i.e., $\Psi_i = j$ and $\Psi_j = i$.
Assume that $|h_j|^2 > |h_i|^2$.
For employing NOMA, the larger power is usually allocated to the user with weaker channel condition. 
Throughout the paper, the user with the weaker channel who does not perform SIC and the user with the stronger channel who performs SIC will be referred to as non-SIC user and SIC user, respectively.
Let user $i$ and user $j$ be the non-SIC user and the SIC user, respectively, without loss of generality, with the assumption of $P_i \geq P_j$.
The data rates of NOMA users are given by
\begin{align}
R_i(P_i,\Psi_i = j) &= \frac{2\Phi\mathcal{B}}{N} \log_2 \bigg( 1 + \frac{N \Gamma_i P_i/2}{N  \Gamma_i P_j /2 + 1} \bigg) \\
R_j(P_j,\Psi_j = i) &= \frac{2\Phi\mathcal{B}}{N} \log_2 ( 1 + N \Gamma_j P_j / 2 ).
\end{align}

Suppose that signals for other users are orthogonally multiplexed with the NOMA signaling of user $i$ and user $j$.
Then, the power allocation problem for user $i$ and user $j$ can be independently formulated from the power allocation problem of (\ref{eq:OP2})-(\ref{eq:OP2_const}), as follows:
\begin{align}
\{P^N_{i,(i,j)}, P^N_{j,(i,j)} \} = ~&\underset{P_i,P_j}{\arg\min}~\mathcal{M}_{(i,j)}^N(P_i, P_j) = \mathcal{M}^N_{i,(i,j)} (P_i) + \mathcal{M}^N_{j,(i,j)} (P_j) \label{eq:OP_PA_NOMA}\\
\text{s.t.}~~ &0\leq P_i, P_j \leq P_0, \label{eq:OP_PA_NOMA_const}
\end{align}
where $\mathcal{M}_{i,(i,j)}^N(P_i) = V \cdot P_i - (\tau Q_i + \tilde{Z}_i) \cdot \tilde{R}_i(P_i,\Psi_i = j)$, which is the optimization metric of user $i$ when user $i$ is paired with user $j$ and $j \neq i$. 
$P_{j,(i,j)}^N$ represents the optimal transmit power for user $j$, when user $j$ is paired with user $i$ for NOMA.
However, $\mathcal{M}_{(i,j)}^N(P_i, P_j)$ is not concave, so the optimization problem of (\ref{eq:OP_PA_NOMA})-(\ref{eq:OP_PA_NOMA_const}) is not a convex problem.
Therefore, the auxiliary variable $q=P_i+P_j$ is introduced. 
$q$ represents power summation of a user pair, so $q \leq 2P_0$ should be satisfied.
Then, the problem of (\ref{eq:OP_PA_NOMA})-(\ref{eq:OP_PA_NOMA_const}) can be resolved by solving two sequential subproblems.

The first subproblem is to find the power allocation for NOMA users with the fixed value of $q$, as formulated by:
\begin{align}
P_{j,(i,j)}^N = ~&\underset{P_j}{\arg \min} ~g(P_j) \label{eq:OP_PA_NOMA1}\\
\text{s.t.}~~ &0\leq P_j \leq P_i \leq P_0, \label{eq:OP_PA_NOMA1_const1} \\
&P_i + P_j = q, \label{eq:OP_PA_NOMA1_const2}
\end{align}
where 
\begin{equation}
g(P_j) = V\cdot q - \frac{2\Phi\mathcal{B}(\tau Q_i + \tilde{Z}_i)}{N} \log_2 \Big( \frac{N \Gamma_i q / 2 + 1}{N \Gamma_i P_j / 2 + 1} \Big) - \frac{2\Phi\mathcal{B}(\tau Q_j + \tilde{Z}_j)}{N} \log_2 ( 1 + N \Gamma_j P_j / 2 ).
\label{eq:OP_PA_NOMA1_metric}
\end{equation}

The power intervals for avoiding the outage event at both NOMA users is considered to solve the subproblem of (\ref{eq:OP_PA_NOMA1})-(\ref{eq:OP_PA_NOMA1_const2}).
$R_j(P_j,i) \geq \rho_j$ should be guaranteed for user $j$ to avoid the outage, in other words, the transmit power should be 
\begin{equation}
P_j \geq P_{j,(i,j)}^o = \frac{2}{N \Gamma_j}(2^{N\rho_j/2\Phi\mathcal{B}}-1). 
\label{eq:P_{j,(i,j)}^o}
\end{equation}
Let $\mathcal{O}_j = [0,P_{j,(i,j)}^o]$ denote the outage region of user $j$.
When $P_j \in \mathcal{O}_j$, the objective function of (\ref{eq:OP_PA_NOMA1_metric}) is given by
\begin{equation}
g_{\mathcal{O}_j}(P_j) = V\cdot q - \frac{2\Phi\mathcal{B}(\tau Q_i + \tilde{Z}_i)}{N} \log_2 \Big( \frac{N\Gamma_i q /2 + 1}{N\Gamma_i P_j / 2 + 1} \Big).
\end{equation}
Similarly, user $i$ can prevent the outage event when $R_i^N(P_i,j) < \rho_i$, and it corresponds to 
\begin{equation}
P_j \leq P_{i,(i,j)}^o = \Big[q-\frac{2}{N \Gamma_i}2^{N\rho_i/2\Phi\mathcal{B}} \Big] \cdot 2^{-N\rho_i/2\Phi\mathcal{B}}.
\label{eq:P_{i,(i,j)}^o}
\end{equation}
Let $\mathcal{O}_i = [P_{i,(i,j)}^o, q]$ denote the outage region of user $i$.
When $P_j \in \mathcal{O}_i$, the objective function of (\ref{eq:OP_PA_NOMA1_metric}) is given by
\begin{equation}
g_{\mathcal{O}_i}(P_j) = V\cdot q - \frac{2\Phi\mathcal{B}(\tau Q_j + \tilde{Z}_j)}{N} \log_2 \Big( 1 + N\Gamma_j P_j /2 \Big).
\end{equation}

Since $N\rho_i / 2\Phi\mathcal{B} > 0$ always, $0 \leq P_{j,(i,j)}^o$ and $P_{i,(i,j)}^o \leq q$ are guaranteed.
Then, Theorem \ref{thm2} gives the solution to the subproblem of (\ref{eq:OP_PA_NOMA1})-(\ref{eq:OP_PA_NOMA1_const2}). 

\begin{theorem}
	\label{thm2}
	Suppose that $1 < \frac{\tau Q_i + \tilde{Z}_i }{\tau Q_j + \tilde{Z}_j } < \frac{\Gamma_j}{\Gamma_i}$. 
	When $P_{j,(i,j)}^o \leq P_{i,(i,j)}^o$, the optimal power allocation of the problem (\ref{eq:OP_PA_NOMA1})-(\ref{eq:OP_PA_NOMA1_const2}) is given by (\ref{eq:thm2_sol1})-(\ref{eq:thm2_sol6}) unless $g(P_{j,(i,j)}^N)<0$, where $\bar{q} = \max(0,q-P_0)$ and 
	\begin{equation}
	P_j^* = \frac{2}{N \Gamma_i \Gamma_j} \cdot \frac{ \Gamma_j (\tau Q_j + \tilde{Z}_j) - \Gamma_i (\tau Q_i + \tilde{Z}_i) }{\tau Q_j + \tilde{Z}_j - \tau Q_i - \tilde{Z}_i }. \label{eq:OP_PA_NOMA1_P_j^*}
	\end{equation}
	When, $P_{j,(i,j)}^o > P_{i,(i,j)}^o$, the optimal power allocation is given by (\ref{eq:thm2_sol7})-(\ref{eq:thm2_sol9}) unless $g(P_{j,(i,j)}^N)<0$.
	If $g(P_{j,(i,j)}^N) \geq 0$, NOMA becomes useless for user $i$ and user $j$.
	
	\begin{itemize}
		\item When $P_{i,(i,j)}^o \leq \frac{q}{2}$ and $q-P_0 \leq P_{j,(i,j)}^o$, let $g_{\text{min}}(x) = \min \{ g(x), g_{\mathcal{O}_i}(q/2), g_{\mathcal{O}_j}(\bar{q}) \}$, then
		\begin{equation}
		P_{j,(i,j)}^N = 
		\begin{cases}
		x, & \text{if } g_{\text{min}}(x) = g(x) \\
		q/2, & \text{if } g_{\text{min}}(x) = g_{\mathcal{O}_i}(q/2) \\
		\bar{q}, & \text{if } g_{\text{min}}(x) = g_{\mathcal{O}_j}(\bar{q})
		\end{cases},~~
		\text{where}~
		x = 
		\begin{cases}
		P_{j,(i,j)}^o, & \text{if } \bar{q} < P_j^* < P_{j,(i,j)}^o \\
		P_j^*, & \text{if } P_{j,(i,j)}^o < P_j^* < P_{i,(i,j)}^o \\
		P_{i,(i,j)}^o, & \text{if } P_{i,(i,j)}^o < P_j^*
		\end{cases}
		\label{eq:thm2_sol1}
		\end{equation}
		
		\item When $P_{j,(i,j)}^o \leq \frac{q}{2} \leq P_{i,(i,j)}^o$ and $q - P_0 \leq P_{j,(i,j)}^o$, let $g_{\text{min}}(x) = \min \{ g(x), g_{\mathcal{O}_j}(\bar{q}) \}$, then
		\begin{equation}
		P_{j,(i,j)}^N = 
		\begin{cases}
		x, & \text{if } g_{\text{min}}(x) = g(x) \\
		\bar{q}, & \text{if } g_{\text{min}}(x) = g_{\mathcal{O}_j}(\bar{q})
		\end{cases},~~
		\text{where}~
		x = 
		\begin{cases}
		P_{j,(i,j)}^o, & \text{if } \bar{q} < P_j^* < P_{j,(i,j)}^o \\
		P_j^*, & \text{if } P_{j,(i,j)}^o < P_j^* < q/2 \\
		q/2, & \text{if } q/2 < P_j^*
		\end{cases}
		\label{eq:thm2_sol2}
		\end{equation}
		
		\item When $P_{i,(i,j)}^o \leq \frac{q}{2}$ and $P_{j,(i,j)}^o \leq q - P_0 \leq P_{i,(i,j)}^o$, let $g_{\text{min}}(x) = \min \{ g(x), g_{\mathcal{O}_i}(q/2) \}$, then
		\begin{equation}
		P_{j,(i,j)}^N = 
		\begin{cases}
		x, & \text{if } g_{\text{min}}(x) = g(x) \\
		q/2, & \text{if } g_{\text{min}}(x) = g_{\mathcal{O}_i}(q/2)
		\end{cases},~~
		\text{where}~
		x = 
		\begin{cases}
		q - P_0, & \text{if } P_j^* < q - P_0 \\
		P_j^*, & \text{if } q - P_0 < P_j^* < P_{i,(i,j)}^o \\
		P_{i,(i,j)}^o, & \text{if } P_{i,(i,j)}^o < P_j^*
		\end{cases}
		\label{eq:thm2_sol3}
		\end{equation}
		
		\item When $P_{j,(i,j)}^o \leq q - P_0 \leq q/2 \leq P_{i,(i,j)}^o$, then 
		\begin{equation}
		P_{j,(i,j)}^N = 
		\begin{cases}
		q-P_0, & \text{if } P_j^* < q - P_0 \\
		P_j^*, & \text{if } q-P_0 < P_j^* < P_{i,(i,j)}^o \\
		P_{i,(i,j)}^o, & \text{if } P_{i,(i,j)}^o < P_j^*
		\end{cases}
		\label{eq:thm2_sol4}
		\end{equation}
		
		\item When $P_{i,(i,j)}^o \leq q-P_0$, 
		\begin{equation}
		P_{j,(i,j)}^N = q/2 \label{eq:thm2_sol5}
		\end{equation}
		
		\item When $\frac{q}{2} \leq P_{j,(i,j)}^o$, 
		\begin{equation}
		P_{j,(i,j)}^N = \bar{q} \label{eq:thm2_sol6}
		\end{equation}
		
		\item When $\bar{q} \leq P_{i,(i,j)}^o < P_{j,(i,j)}^o < q/2$, let $g_{\text{min}} = \min\{ g_{\mathcal{O}_j}(\bar{q}), g_{\mathcal{O}_i}(\frac{q}{2}) \}$, then
		\begin{equation}
		P_{j,(i,j)}^N = 
		\begin{cases}
		P_{i,(i,j)}^o, & \text{if } g_{\text{min}} = g_{\mathcal{O}_j}(P_{i,(i,j)}^o) \\
		P_{j,(i,j)}^o, & \text{if } g_{\text{min}} = g_{\mathcal{O}_i}(P_{j,(i,j)}^o) 
		\end{cases}
		\label{eq:thm2_sol7}
		\end{equation}
		
		\item When $\bar{q} \leq P_{i,(i,j)}^o$ and $\frac{q}{2} \leq P_{j,(i,j)}^o$, then
		\begin{equation}
		P_{j,(i,j)}^N = \bar{q} \label{eq:thm2_sol8}
		\end{equation}
		
		\item When $P_{i,(i,j)}^o < \bar{q}$ and $P_{j,(i,j)}^o < \frac{q}{2}$, then
		\begin{equation}
		P_{j,(i,j)}^N = \frac{q}{2} \label{eq:thm2_sol9}
		\end{equation}
		
	\end{itemize}
	
\end{theorem}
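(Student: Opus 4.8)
The plan is to remove the coupling constraint and then minimize a piecewise-defined function of one variable. Substituting $P_i = q - P_j$ from (\ref{eq:OP_PA_NOMA1_const2}), the sub-problem (\ref{eq:OP_PA_NOMA1})--(\ref{eq:OP_PA_NOMA1_const2}) becomes the minimization of $g(P_j)$ in (\ref{eq:OP_PA_NOMA1_metric}) over $P_j \in [\bar{q}, q/2]$, since $P_j \le P_i$ forces $P_j \le q/2$, $P_i \le P_0$ forces $P_j \ge q - P_0$, and $\bar{q}\le q/2$ holds because $q\le 2P_0$. On this interval $g$ is piecewise: for $P_j$ below the outage threshold $P_{j,(i,j)}^o$ user $j$ is in outage and $g=g_{\mathcal{O}_j}$; for $P_j$ above $P_{i,(i,j)}^o$ user $i$ is in outage and $g=g_{\mathcal{O}_i}$; on the non-outage part in between, $g$ keeps the full form (\ref{eq:OP_PA_NOMA1_metric}). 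Direct differentiation shows $g_{\mathcal{O}_j}$ is strictly increasing and $g_{\mathcal{O}_i}$ strictly decreasing, so on the outage pieces only the outer endpoints $\bar{q}$ (value $g_{\mathcal{O}_j}(\bar{q})$) and $q/2$ (value $g_{\mathcal{O}_i}(q/2)$) can be minimizers. Entering the non-outage region, $g$ drops by $(\tau Q_j+\tilde{Z}_j)\rho_j$ at $P_{j,(i,j)}^o$ and by $(\tau Q_i+\tilde{Z}_i)\rho_i$ at $P_{i,(i,j)}^o$ (the suppressed rate term reappears), so the threshold points also have to be carried as candidates, evaluated with the full $g$; and when $P_{i,(i,j)}^o<P_{j,(i,j)}^o$ the two outage regions overlap, both rates vanish on the overlap and $g\equiv Vq$ there, which is why this regime is treated separately in (\ref{eq:thm2_sol7})--(\ref{eq:thm2_sol9}).

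The core of the argument is the behavior of the full $g$ on the non-outage middle piece. Writing $A_i=\tau Q_i+\tilde{Z}_i$ and $A_j=\tau Q_j+\tilde{Z}_j$, one computes $g'(P_j)=\frac{\Phi\mathcal{B}}{\ln 2}\left(\frac{A_i\Gamma_i}{N\Gamma_i P_j/2+1}-\frac{A_j\Gamma_j}{N\Gamma_j P_j/2+1}\right)$, whose unique zero is the $P_j^*$ in the statement, and $g''(P_j)=\frac{\Phi\mathcal{B}N}{2\ln 2}\left(\frac{A_j\Gamma_j^2}{(N\Gamma_j P_j/2+1)^2}-\frac{A_i\Gamma_i^2}{(N\Gamma_i P_j/2+1)^2}\right)$. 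The hypothesis $1<A_i/A_j<\Gamma_j/\Gamma_i$, together with $\Gamma_j>\Gamma_i$ (from $|h_j|^2>|h_i|^2$), is exactly what is needed: the upper bound gives $A_i\Gamma_i<A_j\Gamma_j$, hence $g'(0)<0$ and $P_j^*>0$; the lower bound gives $A_i>A_j$, hence $g'(P_j)\to 0^{+}$ and $g''(P_j)<0$ as $P_j\to\infty$, whereas $g''(0)\propto A_j\Gamma_j^2-A_i\Gamma_i^2>0$ because $A_i\Gamma_i^2=(A_i\Gamma_i)\Gamma_i<(A_j\Gamma_j)\Gamma_i<(A_j\Gamma_j)\Gamma_j$. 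The equation $g''(P_j)=0$ becomes linear in $P_j$ after taking positive square roots, so $g''$ changes sign at most once, and by the boundary behavior exactly once, from $+$ to $-$; hence $g'$ increases then decreases, and with $g'(0)<0$ and $g'\to 0^{+}$ it crosses zero exactly once and stays positive afterwards. Therefore $g$ is quasi-convex on $[0,\infty)$ with unique minimizer $P_j^*$ (even though it is not convex), and on any sub-interval $[\alpha,\beta]$ of the middle piece the minimizer of $g$ is $P_j^*$ clipped to $[\alpha,\beta]$ --- precisely the point $x$ appearing in (\ref{eq:thm2_sol1})--(\ref{eq:thm2_sol3}).

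It then remains to assemble the global minimizer of $g$ over $[\bar{q},q/2]$ from the finite candidate set $\{\bar{q},\ q/2,\ P_{j,(i,j)}^o,\ P_{i,(i,j)}^o,\ x\}$, where $\bar{q}$ and $q/2$ are evaluated in the outage forms $g_{\mathcal{O}_j},g_{\mathcal{O}_i}$ and the others in the full form $g$. Which of these are feasible and active depends solely on the ordering of the breakpoints $\bar{q},q/2,P_{j,(i,j)}^o,P_{i,(i,j)}^o$ and on the sign of $P_{i,(i,j)}^o-P_{j,(i,j)}^o$; running through these orderings and in each case taking the smallest-$g$ candidate (the $g_{\text{min}}$ comparisons of the statement) reproduces (\ref{eq:thm2_sol1})--(\ref{eq:thm2_sol6}) for $P_{j,(i,j)}^o\le P_{i,(i,j)}^o$ and (\ref{eq:thm2_sol7})--(\ref{eq:thm2_sol9}) otherwise. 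Finally, the enclosing problem (\ref{eq:OP_PA_NOMA})--(\ref{eq:OP_PA_NOMA_const}) also admits $q=0$ (serving neither member of the pair), which attains objective value $0$; hence if the optimal NOMA value $g(P_{j,(i,j)}^N)$ is still nonnegative, a positive power budget for the pair cannot reduce the drift-plus-penalty metric and NOMA is discarded for $(i,j)$ --- the concluding clause of Theorem \ref{thm2}.

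The only real obstacle is the middle-piece analysis: $g$ is genuinely non-convex (its second derivative changes sign on $(0,\infty)$), so everything rests on extracting quasi-convexity --- a single sign change of $g'$ from $-$ to $+$ --- from the two-sided hypothesis $1<A_i/A_j<\Gamma_j/\Gamma_i$. Once that is established, the rest is endpoint bookkeeping over the (mostly degenerate or infeasible) orderings of the four breakpoints $\bar{q}$, $q/2$, $P_{j,(i,j)}^o$, $P_{i,(i,j)}^o$.
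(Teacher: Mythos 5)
Your proposal follows the same overall route as the paper's proof: substitute $P_i = q - P_j$ to reduce the sub-problem to minimizing the piecewise objective $g$ over $[\bar{q}, q/2]$, note that $g_{\mathcal{O}_j}$ is increasing and $g_{\mathcal{O}_i}$ is decreasing so only the outer endpoints $\bar{q}$ and $q/2$ can win on the outage pieces, locate the stationary point $P_j^*$ of the full $g$ on the non-outage piece, and then enumerate the orderings of the breakpoints $\bar{q}$, $q/2$, $P_{j,(i,j)}^o$, $P_{i,(i,j)}^o$, treating the overlapping-outage regime $P_{i,(i,j)}^o < P_{j,(i,j)}^o$ separately because $g \equiv Vq$ on the overlap. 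Where you genuinely diverge --- and in fact strengthen the argument --- is the second-order analysis. The paper presents a single constant expression for $\mathrm{d}^2 g/\mathrm{d}P_j^2$, which is really only the value of the second derivative at the critical point $P_j^*$, and from its positivity jumps to the conclusion that $P_j^*$ is the \emph{global} minimizer; as written this certifies only a local minimum, and $g$ is indeed not convex on $[0,\infty)$. Your argument supplies the missing justification: writing $A_i=\tau Q_i+\tilde{Z}_i$, $A_j=\tau Q_j+\tilde{Z}_j$, you show $g''$ is positive at $0$ (since $A_i\Gamma_i^2 < A_j\Gamma_j\Gamma_i < A_j\Gamma_j^2$), negative at infinity (since $A_i>A_j$), and vanishes at most once because the equation $g''=0$ linearizes after taking positive square roots; combined with $g'(0)<0$ and $g'\to 0^{+}$, this forces a single sign change of $g'$ from negative to positive, i.e.\ quasi-convexity, which is precisely what legitimizes clipping $P_j^*$ to each sub-interval in (\ref{eq:thm2_sol1})--(\ref{eq:thm2_sol4}). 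The remaining breakpoint bookkeeping, the jump-discontinuity candidates at the outage thresholds, and the ``NOMA is useless when $g(P_{j,(i,j)}^N)\ge 0$'' clause (comparison against the zero-power alternative) all match the paper's treatment, which likewise works one ordering in detail and asserts the others by analogy.
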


\begin{proof}
	
	The constraints (\ref{eq:OP_PA_NOMA1_const1}) and (\ref{eq:OP_PA_NOMA1_const2}) can be combined together, as written by
	\begin{equation}
	\bar{q} \leq P_j \leq q/2 (\leq P_0). \label{eq:OP_PA_NOMA1_PowerIntrval}
	\end{equation}
	Differentiating the objective function of $g(P_j)$ by $P_j$, 
	\begin{equation}
	\frac{\mathrm d g}{\mathrm d P_j} = \frac{\Gamma_i (\tau Q_i + \tilde{Z}_i)}{N \Gamma_i P_j/2 + 1} - \frac{\Gamma_j (\tau Q_j + \tilde{Z}_j)}{N \Gamma_j P_j/2 + 1}.
	\end{equation}
	The local minimizer $P_j^*$ is obtained from $\frac{\mathrm d g}{\mathrm d P_j} = 0$, as given by (\ref{eq:OP_PA_NOMA1_P_j^*}).
	The second derivative of the objective function of $g(P_j)$ becomes
	\begin{equation}
	\frac{\mathrm d^2 g}{\mathrm d P_j^2} = \frac{N \Gamma_i^2 \Gamma_j^2 (\tau Q_j + \tilde{Z}_j - \tau Q_i - \tilde{Z}_i)^2}{(\Gamma_i - \Gamma_j)^2} \bigg( \frac{1}{\tau Q_j + \tilde{Z}_j} - \frac{1}{\tau Q_i + \tilde{Z}_i} \bigg).
	\end{equation}
	Since we already suppose that $1 < \frac{\tau Q_i + \tilde{Z}_i }{\tau Q_j + \tilde{Z}_j } < \frac{\Gamma_j}{\Gamma_i}$, so $P_j^* \geq 0$ and $\frac{\mathrm d^2 g}{\mathrm d P_j^2} > 0$, then $P_j^*$ is the global minimizer of $g(P_j)$.
	
	Consider the case $P_{j,(i,j)}^o \leq P_{i,(i,j)}^o$ first.
	The optimal point should be carefully found depending on the relative positions of (\ref{eq:OP_PA_NOMA1_PowerIntrval}), $\mathcal{O}_i$, and $\mathcal{O}_j$.
	The objective functions in $\mathcal{O}_j$ and $\mathcal{O}_i$ are $g_{\mathcal{O}_j}(P_j)$ and $g_{\mathcal{O}_i}(P_j)$, respectively.
	Therefore, three objective functions of $g(P_j)$, $g_{\mathcal{O}_j}(P_j)$ and $g_{\mathcal{O}_i}(P_j)$ are compared to determine the optimal power level, depending on the value of $P_j^*$.
	For example, consider the case of $q-P_0 \leq P_{j,(i,j)}^o$ and $P_{i,(i,j)}^o \leq \frac{q}{2}$, then $\mathcal{O}_j=[q-P_0, P_{j,(i,j)}^o]$ and $\mathcal{O}_i = [P_{i,(i,j)}^o, \frac{q}{2}]$.
	In addition, $g_{\mathcal{O}_j}(\bar{q})$ and $g_{\mathcal{O}_i}(q/2)$ are minimizers in $\mathcal{O}_j$ and $\mathcal{O}_i$, respectively.
	However, the minimizer in $[P_{j,(i,j)}^o,P_{i,(i,j)}^o]$ depends on $P_j^*$. 
	If $P_{j,(i,j)}^o \leq P_j^* \leq P_{i,(i,j)}^o$, $P_j^*$ minimizes $g(P_j)$ obviously.
	On the other hand, if $P_j^* < P_{j,(i,j)}^o$, $P_{j,(i,j)}^o$ becomes the minimizer of $g(P_j)$ in $[P_{j,(i,j)}^o,P_{i,(i,j)}^o]$. 
	Similarly, if $P_{i,(i,j)}^o < P_j^*$, $P_{i,(i,j)}^o$ is the minimizer of $g(P_j)$ in $[P_{j,(i,j)}^o,P_{i,(i,j)}^o]$. 
	Therefore, the optimal objective value in $[P_{j,(i,j)}^o, P_{i,(i,j)}^o]$ is given by $g(x)$, where
	\begin{equation}
	x = 
	\begin{cases}
		P_{j,(i,j)}^o, & \text{if } P_j^* < P_{j,(i,j)}^o \\
		P_j^*, & \text{if } P_{j,(i,j)}^o \leq P_j^* \leq P_{i,(i,j)}^o \\
		P_{i,(i,j)}^o, & \text{if } P_{i,(i,j)}^o < P_j^*
	\end{cases}.
	\end{equation}
	Finally, the globally optimal objective can be obtained by taking the minimal one among $g(x)$, $g_{\mathcal{O}_j}(\bar{q})$ and $g_{\mathcal{O}_i}(q/2)$, as shown in (\ref{eq:thm2_sol1}). 
	In a similar way, the solutions to (\ref{eq:thm2_sol2})-(\ref{eq:thm2_sol5}) can be also derived for the other cases depending on the relative positions of (\ref{eq:OP_PA_NOMA1_PowerIntrval}), $\mathcal{O}_i$, and $\mathcal{O}_j$.
	
	Next, at least one of users experiences the link outage in the case of $P_{i,(i,j)}^o \leq P_{j,(i,j)}^o$.
	We can define intervals of $\mathcal{O}_i = [P_{j,(i,j)}^o,P_0]$, $\mathcal{O}_j = [0,P_{i,(i,j)}^o]$, and $\mathcal{O}_{(i,j)} = [P_{i,(i,j)}^o, P_{j,(i,j)}^o]$ as the outage regions of user $i$, user $j$, and both users, respectively.
	If $P_{j,(i,j)}^N \in \mathcal{O}_{(i,j)}$, both links are in outage so NOMA becomes meaningless.
	Therefore, we just need to compare the objective function values of $g_{\mathcal{O}_i}$ and $g_{\mathcal{O}_j}$.
	For example, when $\bar{q} \leq P_{i,(i,j)}^o$, the minimal objective function value in $\mathcal{O}_j$ is $g_{\mathcal{O}_j}(\bar{q})$.
	On the other hand, when $P_{j,(i,j)}^o \leq \frac{q}{2}$, the minimal objective function value in $\mathcal{O}_i$ is $g_{\mathcal{O}_i}(\frac{q}{2})$.
	Thus, the solution to (\ref{eq:thm2_sol7}) is obtained by choosing the minimum of $g_{\mathcal{O}_j}(\bar{q})$ and $g_{\mathcal{O}_i}(\frac{q}{2})$.
	In addition, $P_{i,(i,j)}^o < \bar{q}$ and $\frac{q}{2} < P_{j,(i,j)}^o$ mean $P_{j,(i,j)}^N$ cannot be in $\mathcal{O}_j$ and $\mathcal{O}_i$, respectively, so (\ref{eq:thm2_sol8}) and (\ref{eq:thm2_sol9}) can be directly obtained.
	
\end{proof}

\textit{Remark}: Even though two users are paired for NOMA, no transmit power could be allocated to one of users.
This case indicates that the resource of one of users is taken by the other one. 
For example, when the outage occurs at a certain link, the resource of this link is preferred to be exploited by another link for resource efficiency.
Thus, we can see that finding the optimal power in this model enables user scheduling as well as flexible use of system resources.

The second subproblem for resolving the power allocation problem of (\ref{eq:OP_PA_NOMA})-(\ref{eq:OP_PA_NOMA_const}) finds the optimal auxiliary variable of $q$, to minimize the optimization metric of (\ref{eq:OP_PA_NOMA}), as follows:
\begin{align}
q^N = ~&\underset{q}{\arg \min} ~h(q) \label{eq:OP_PA_NOMA2}\\
\text{s.t.}~~ &0 \leq q \leq 2P_0, \label{eq:OP_PA_NOMA2_const}
\end{align}
where 
\begin{align}
h(q) &= V\cdot q - \frac{2\Phi\mathcal{B}(\tau Q_i + \tilde{Z}_i)}{N} \log_2 \Big( \frac{N \Gamma_i q /2 + 1}{N \Gamma_i P_j^N / 2 + 1} \Big) - \frac{2\Phi\mathcal{B}(\tau Q_j + \tilde{Z}_j)}{N} \log_2 \Big( 1 + N \Gamma_j P_j^N/2 \Big).
\end{align}

Differentiating $h(q)$ by $q$, 
\begin{equation}
\frac{\mathrm d h}{\mathrm d q} = V - \frac{\tau Q_i + Z_i}{\ln 2} \cdot \frac{\Phi\mathcal{B}\Gamma_i}{N\Gamma_i q/2 +1},
\end{equation}
and the local minimizer $q^*$ can be obtained from $\frac{\mathrm d h}{\mathrm d q} = 0$, as written by
\begin{equation}
q^* = \frac{2}{N\Gamma_i} \Big( \frac{\Phi\mathcal{B}\Gamma_i}{V \ln 2}(\tau Q_i + Z_i) - 1 \Big).
\end{equation}
Since 
\begin{equation}
\frac{\mathrm d^2 h}{\mathrm d q^2} = \frac{\tau Q_i + Z_i}{\ln 2} \cdot \frac{\mathcal{B}\Gamma_i}{(N\Gamma_i q/2 + 1)^2} \cdot \frac{N\Gamma_i}{2} > 0,
\end{equation}
$q^*$ is the global minimizer of $h(q)$.
Considering the constraint (\ref{eq:OP_PA_NOMA2_const}), the optimal $q$ is obtained by
\begin{equation}
q^N = 
\begin{cases}
0, &~ \text{if}~ q^* < 0 \\
q^*, &~ \text{if}~ 0 \leq q^* \leq 2P_0 \\
2P_0, &~ \text{if}~ 2P_0 < q^*
\end{cases}.
\label{eq:sol_PowerSubprob2}
\end{equation}
Herein, $q^N=0$ makes NOMA useless, because both users will be in outage.

Thus, the non-convex optimization problem of (\ref{eq:OP_PA_NOMA})-(\ref{eq:OP_PA_NOMA_const}) for finding the optimal power allocation for NOMA users can be solved by dealing with two convex subproblems of (\ref{eq:OP_PA_NOMA2})-(\ref{eq:OP_PA_NOMA2_const}) and (\ref{eq:OP_PA_NOMA1})-(\ref{eq:OP_PA_NOMA1_const2}) in sequential. 
The transmitter can first optimize the transmit power consumption for a user pair, $q^N$, based on the current CSI and QSI.
Then, power levels allocated to NOMA users, $P_{i,(i,j)}^N$ and $P_{j,(i,j)}^N$, can be achieved by Theorem \ref{thm2}.

\section{Matching Algorithm for NOMA User Pairing}
\label{sec:user_pairing}

Since the optimal power allocation for NOMA users is derived when a pair of NOMA users is already determined, there remains the problem of which users are better to be paired for NOMA signaling.
User pairing can be interpreted as a kind of matching. 
We define the matching $\Psi$ which indicates user pairings for NOMA signaling.
\begin{definition}
	A matching $\Psi$ is defined by (\ref{eq:matching_const1})-(\ref{eq:matching_const3}) as follows:
	\begin{align}
	&\Psi (u_i) \in \mathcal{U},~\forall u_i \in \mathcal{U} \label{eq:matching_const1} \\
	&|\Psi (u_i)| = 1 \label{eq:matching_const2} \\
	&\Psi (u_i) = u_j \iff \Psi (u_j) = u_i \label{eq:matching_const3} 
	\end{align}
\end{definition}

Specifically, $\Psi(u_i)$ indicates the user paired with user $u_i$ and both users are in the same user set $\mathcal{U}$ consisting of $N$ users, so (\ref{eq:matching_const1}) is satisfied.
$\Psi(u_i)=u_i$ means that OMA is used for $u_i$, and $\Psi(u_i) = u_j$ for $i \neq j$ indicates that $u_i$ and $u_j$ are paired for NOMA.
Since we considered the two-user model, (\ref{eq:matching_const2}) is provided and $\Psi$ becomes the one-to-one matching.
When $u_i$ and $u_j$ are paired, both $\Psi(u_i) = u_j$ and $\Psi(u_j) = u_i$ are satisfied as shown in (\ref{eq:matching_const3}), so $\Psi = \Psi^{-1}$.

The matching $\Psi$ is constructed according to the preference lists of users. 
Denote the preference list of $u_i$ by $\mathcal{P}_i$ for all $u_i \in \mathcal{U}$.
When $\mathcal{M}_{i,(i,j)}^N < \mathcal{M}_{i,(i,k)}^N$, $u_i$ prefers to be paired with $u_j$ to $u_k$.
Herein, $\mathcal{M}_{i,(i,i)}^N = \mathcal{M}_i^O$.
In addition, we only allow $u_j$ to be included in $\mathcal{P}_i$ when $\mathcal{M}_{i,(i,j)}^N \leq 0$.
The reason is that $\mathcal{M}_{i,(i,j)}^N = 0$ for any $u_j$ is obtained when $P_i = 0$. 
Before constructing $\Psi$, each optimization metric $\mathcal{M}_{i,(i,j)}^N$, for all $i,j \in \mathcal{U}$ can be obtained by solving the problems of (\ref{eq:OP_PA_OMA})-(\ref{eq:OP_PA_OMA_const}) and (\ref{eq:OP_PA_NOMA})-(\ref{eq:OP_PA_NOMA_const}) for $i = j$ and $i \neq j$, respectively.
For given $\Psi$, the total optimization metric of (\ref{eq:Lyapunov_metric}) can be computed as $\mathcal{M}(\mathbf{P},\boldsymbol{\Psi}) = \sum_{i \in \mathcal{N}} \mathcal{M}_{i,(i,\Psi(u_i))}^N$.

Since it takes too much complexity to compute and compare optimization metrics for all possible pairing combinations, we focus on seeking the stable matching by using the deferred acceptance (DA) procedure \cite{Book:Roth}.
Each user sends the matching request to the most preferred user, and the user who receives the request can accept or reject the pairing with the sender. 
The user pairing algorithm is shown in Algorithm \ref{algo:user_pairing}, and the details of matching request and decision for the received request are expressed in Algorithm \ref{algo:matching_request}.

For example, suppose that $u_i$ sends the matching request to $u_j$ in the matching $\Psi$.
Let $\Psi'$ be the \textit{optimal} matching when $u_j$ accepts the request from $u_i$.
Then, $u_j$ decides whether to accept or reject the request from $u_i$ by comparing 
$\mathcal{M}(\mathbf{P}(\boldsymbol{\Psi}),\boldsymbol{\Psi})$ and $\mathcal{M}(\mathbf{P}(\boldsymbol{\Psi'}),\boldsymbol{\Psi'})$.
$\mathcal{M}(\mathbf{P}(\boldsymbol{\Psi}),\boldsymbol{\Psi})$ is already obtained with the current matching $\Psi$, and we need to compute $\mathcal{M}(\mathbf{P}(\boldsymbol{\Psi'}),\boldsymbol{\Psi'})$.
When $\Psi(u_i) = u_i$ and $\Psi(u_j)=u_j$, the matching request is simply accepted when $\mathcal{M}_{i,(i,j)}^N (P_{i,(i,j)}^N) + \mathcal{M}_{j,(i,j)}^N (P_{j,(i,j)}^N) < \mathcal{M}_i^O (P_i^O) + \mathcal{M}_j^O (P_j^O)$. Then the \textit{optimal} matching becomes 
$\Psi' = \Psi \setminus \{(u_i,u_i),(u_j,u_j)\} \cup \{(u_i,u_j),(u_j,u_i)\}$.

However, when $\Psi(u_i) \neq u_i$ and/or $\Psi(u_j) \neq u_j$, if $u_j$ accepts the matching request from $u_i$, $\Psi(u_i)$ and/or $\Psi(u_j)$ should find another pair to construct the \textit{optimal} matching $\Psi'$.
According to Algorithm \ref{algo:matching_request}, $\Psi(u_i)$ and $\Psi(u_j)$ send the matching request to their most preferred users, except for $u_i$ and $u_j$.
If the most preferred users of $\Psi(u_i)$ and $\Psi(u_j)$ are themselves respectively, 
then $\Psi' = \Psi \setminus \{ (u_i,\Psi(u_i)), (u_j,\Psi(u_j)) \} \cup \{ (u_i,u_j), (u_j,u_i), (\Psi(u_i), \Psi(u_i)), (\Psi(u_j),\Psi(u_j)) \}$.
If not, Algorithm \ref{algo:matching_request} should be recursively performed to construct $\Psi'$, until all users are matched.
Finally, compute $\mathcal{M}(\mathbf{P}(\boldsymbol{\Psi'}),\boldsymbol{\Psi'})$ and compare it with $\mathcal{M}(\mathbf{P}(\boldsymbol{\Psi}),\boldsymbol{\Psi})$.
If $\mathcal{M}(\mathbf{P}(\boldsymbol{\Psi}),\boldsymbol{\Psi}) > \mathcal{M}(\mathbf{P}(\boldsymbol{\Psi'}),\boldsymbol{\Psi'})$, the match request from $u_i$ to $u_j$ is accepted and $\Psi$ is updated by $\Psi'$, as shown in Algorithm \ref{algo:user_pairing}.

\begin{algorithm}
	\caption{User pairing algorithm for NOMA transmissions
		\label{algo:user_pairing}}
	\begin{algorithmic}[1]
		\State{Initialize $\Psi(u) \leftarrow u,~\forall u \in \mathcal{U}$.}
		\For{ $\forall u_i \in \mathcal{U}$}
		{
			\State{$\mathcal{F} \leftarrow \phi$}
			\While {true}
				\State{Find $u_j \leftarrow \underset{u \in \mathcal{P}_i\setminus \mathcal{F}}{\arg \min}~\mathcal{M}_{i,(i,j)}^N$}
				\If{$j==\Psi(u_i)$}\State{break;}\EndIf
				\State{$\mathcal{F} \leftarrow \mathcal{F} \cup \{u_j\}$}
				\State{$\Psi' \leftarrow \textit{MatchRequest}(u_i,u_j,\Psi,\phi)$}
				\If{$\mathcal{M}(\mathbf{P}(\boldsymbol{\Psi}),\boldsymbol{\Psi}) > \mathcal{M}(\mathbf{P}(\boldsymbol{\Psi'}),\boldsymbol{\Psi'})$}
					\State{$\Psi \leftarrow \Psi'$}
					\State{break;}
				\EndIf
			\EndWhile
		}
		\EndFor
	\end{algorithmic}
\end{algorithm}

\begin{algorithm}
	\caption{Matching request algorithm
		\label{algo:matching_request}}
	\begin{algorithmic}[1]
		\State{$\mathbf{Input}$: $u_i$, $u_j$, $\Psi'$, and $\mathcal{E}$.}
		\State{$\mathbf{Output}$: $\Psi'$}
		\State{$m \leftarrow \Psi'(u_i)$ and $p \leftarrow \Psi'(u_j)$}
		\State{$\Psi(u_i) \leftarrow u_j$ and $\Psi(u_j) \leftarrow u_i$}
		\If{$i \neq m$} $\Psi'(u_m) \leftarrow u_m$ \EndIf
		\If{$j \neq p$} $\Psi'(u_p) \leftarrow u_p$ \EndIf
		\State{$\mathcal{E}_m \leftarrow \mathcal{E} \cup \{u_i,u_j\}$}
		\If{$i \neq m$}
		\State{Find $n \leftarrow \underset{u_n \in \mathcal{P}_m\setminus \mathcal{E}_m}{\arg \min}~\mathcal{M}_{m,(m,n)}^N$}
		\If{$n == m$} $\Psi'(u_n) \leftarrow u_n$
		\ElsIf{$n == p$} $\Psi'(u_n) \leftarrow u_p$ and $\Psi'(u_p) \leftarrow u_n$ 
		\Else $~\Psi' \leftarrow \textit{MatchRequest}(u_m,u_n,\Psi',\mathcal{E}_m)$
		\EndIf
		\EndIf
		
		\State{$\mathcal{E}_p \leftarrow \mathcal{E} \cup \{u_i,u_j\}$}
		\If{$j \neq p~\&\&~\Psi'(u_p) == u_p$}
		\State{Find $q \leftarrow \underset{u_q \in \mathcal{P}_p\setminus \mathcal{E}_p}{\arg \min}~\mathcal{M}_{p,(p,q)}^N$}
		\If{$q == p$} $\Psi'(u_q) \leftarrow u_q$
		\Else $~\Psi' \leftarrow \textit{MatchRequest}(u_p,u_q,\Psi',\mathcal{E}_p)$
		\EndIf
		\EndIf
		\State{$\mathbf{return}~\Psi'$}
	\end{algorithmic}
\end{algorithm}

The optimal user pairing can be obtained by searching over all possible combinations of user pairing to find $\boldsymbol{\Psi}$ that maximizes $\mathcal{M}(\mathbf{P}(\boldsymbol{\Psi}),\boldsymbol{\Psi})$.
Suppose that $L$ pairs are allowed for NOMA.
Then, the transmitter needs to exhaustively search $\prod_{l=1}^{L} \binom{N-2(l-1)}{2}$ combinations for the optimal user pairing, and time complexity is approximately $O(N^{2L})$.
In the proposed algorithm, the worst case is that no pair is generated for $u_1,\cdots,u_{N-L}$ and then 
a new pair is matched every time for the last $L$ users. 
It requires $\sum_{l=1}^{L}N-2(l-1)+N(N-L)$ comparison steps and time complexity is $O(N^2)$.
Thus, the complexity of the proposed matching algorithm is much less than that of the optimal user pairing.
Note that the complexity gain of the proposed algorithm grows with large $N$ and $L$.

\section{Performance Evaluation}
\label{sec:simulation}

Our data-intensive simulations for performance evaluation are based on the cellular model with a radius $R=50$. 
There exist $N=40$ users and all users are uniformly placed in the whole cellular region.
Assume that $\mathcal{N}_s = \mathcal{N}$, $\rho = \rho_n$ and $\eta = \eta_n$ for all $n \in \mathcal{N}$.
All parameters are listed in Table \ref{table:parameters}, and these are used unless otherwise noted.
A short frame structure designed for URLLCs in 5G networks \cite{ShortPacket:VTC2015Kela} is used for simulation results.
Note that the maximum queueing delay bound is $D_{max}^q = 0.9$ ms, and we will show that the proposed algorithm achieves this delay constraint.

To verify the advantages of the proposed algorithm, this paper compares the proposed one with other schemes:
\begin{itemize}
	\item `pMax': The transmitter always consumes the maximum power budget for all of $N$ users, except when the link outage occurs.
	NOMA and user pairing are not considered.
	
	\item `pMin': The transmitter always consumes the minimum power to avoid the link outage.
	If the required power for avoiding the outage is greater than the power budget, the link remains in outage.
	NOMA and user pairing are not considered.
	
	\item `opt. OMA': The power allocation is based on the proposed optimization framework (\ref{eq:OP1})-(\ref{eq:OP1_Powerconst}) but NOMA and user pairing are not considered.
\end{itemize}
To emphasize the difference from comparison schemes, we will call the proposed scheme `opt. hybrid MA'.

\begin{table}[t!]%
	\caption{System parameters \cite{ShortPacket:VTC2015Kela}, \cite{QueueDelay:VTC2016She}}
	\label{table:parameters}
	{\footnotesize
		\begin{center}
			\begin{tabular}{|p{5.5cm}|c|p{5.0cm}|c|}
				\hline
				E2E delay bound ($D_{max}$) & 1 ms & Power budget for each user ($P_0$) & 3 W \\
				\hline
				Frame duration ($T_f$) & 0.1 ms\ & User number ($N$) & 40 \\
				\hline
				DL phase duration ($T_D$) & 0.05 ms & Bandwidth ($BW$) & 20 MHz \\
				\hline
				Maximal queueing delay ($D_{max}^q$) & 0.9 ms & $V$ & $5\times 10^5$ \\ 
				\hline
				Packet size ($u$) & 160 bits & $\lambda_{min}$ & 5 packets \\ 
				\hline 
				Cell radius ($R$) & 50 m & $\lambda_{max}$ & 10 packets \\ 
				\hline
				Path loss model & 35.3+37.6($d_k$) & $\rho$ & 7 Mb \\
				\hline
				Single-sided noise spectral density ($N_0$) & -173 dBm/Hz & $\eta$ & 8.5 Mb \\
				\hline
			\end{tabular}
		\end{center}
	}
\end{table}

\begin{figure}[t]
	\minipage{0.45\textwidth}
	\includegraphics[width=\linewidth]{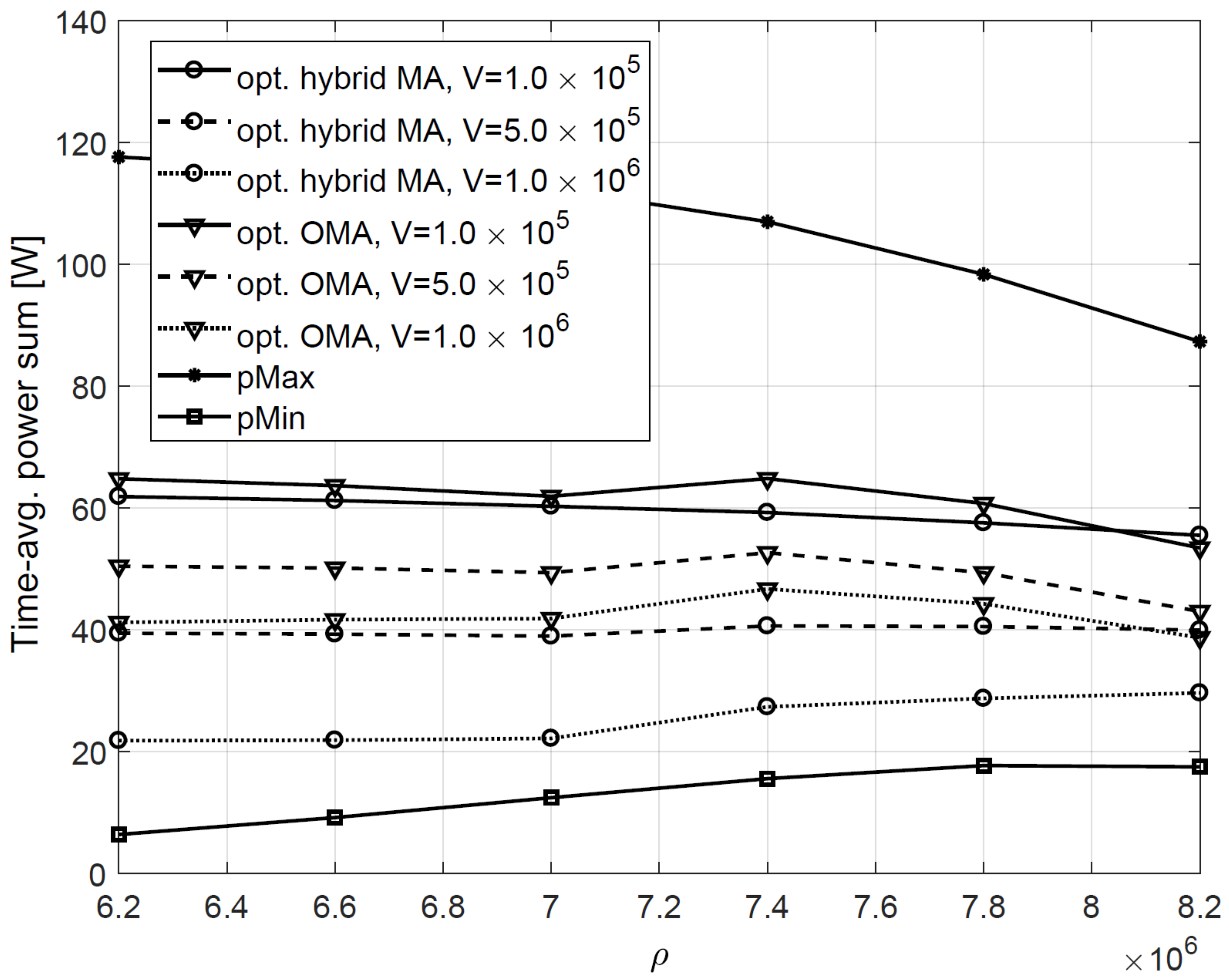}
	\caption{Time-avg. transmit power sum vs. $\rho$} \label{fig:power_center_rholist}
	\endminipage\hfill
	\minipage{0.45\textwidth}
	\includegraphics[width=\linewidth]{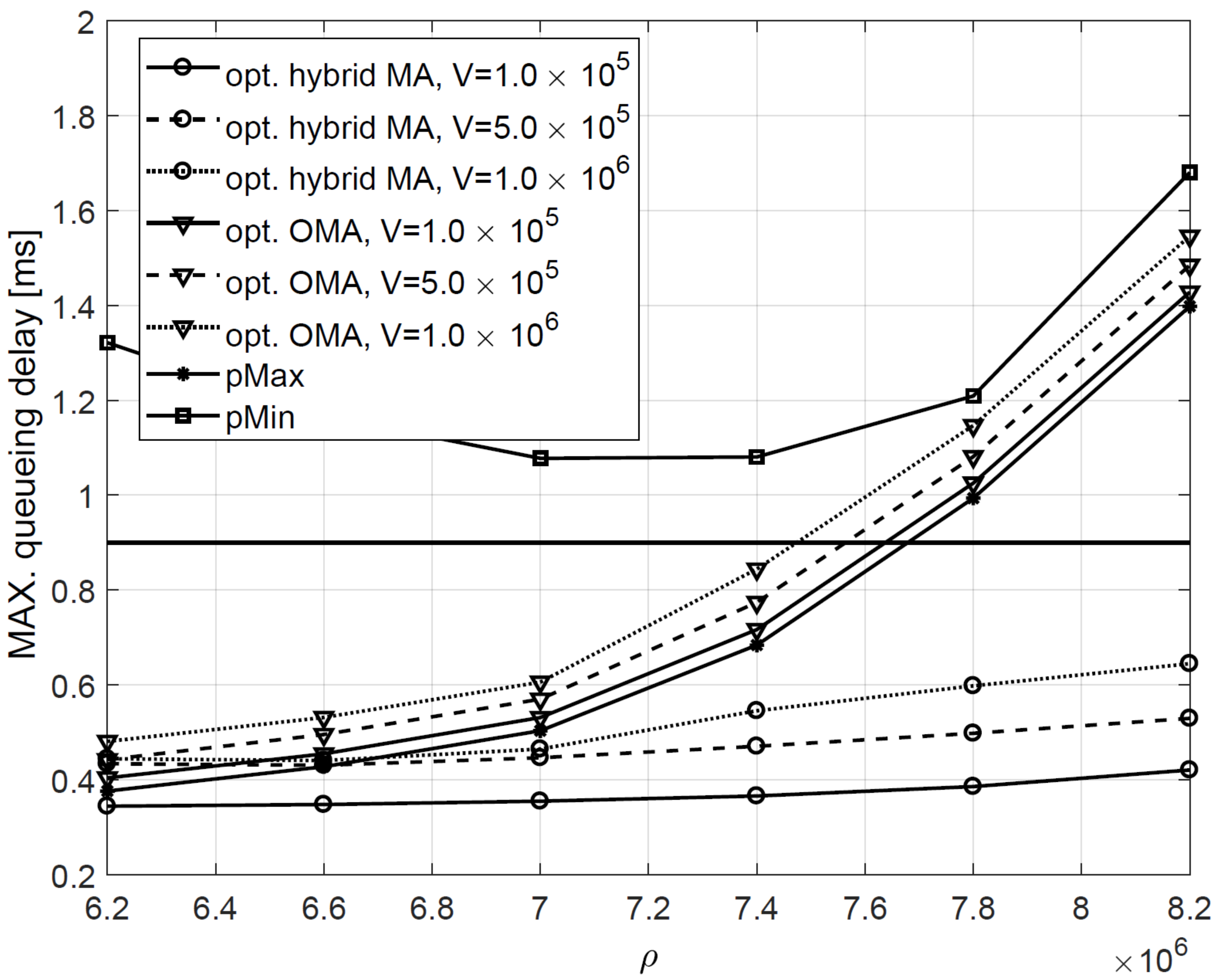}
	\caption{Maximal queueing delay vs. $\rho$} \label{fig:delay_center_rholist}
	\endminipage
\end{figure}

\begin{figure}[t]
	\minipage{0.45\textwidth}
	\includegraphics[width=\linewidth]{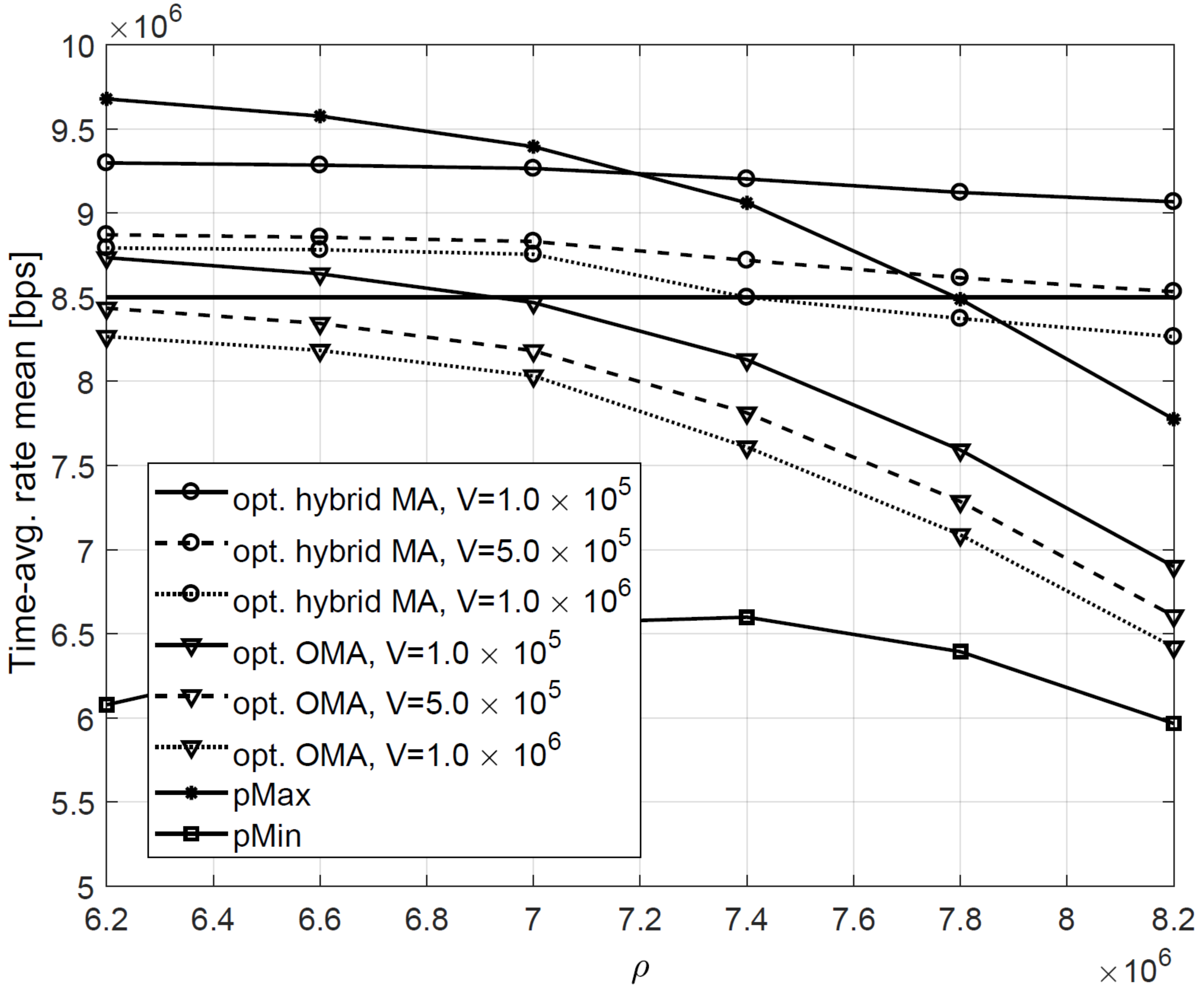}
	\caption{Time-avg. data rate vs. $\rho$} \label{fig:datarate_center_rholist}
	\endminipage\hfill
	\minipage{0.45\textwidth}
	\includegraphics[width=\linewidth]{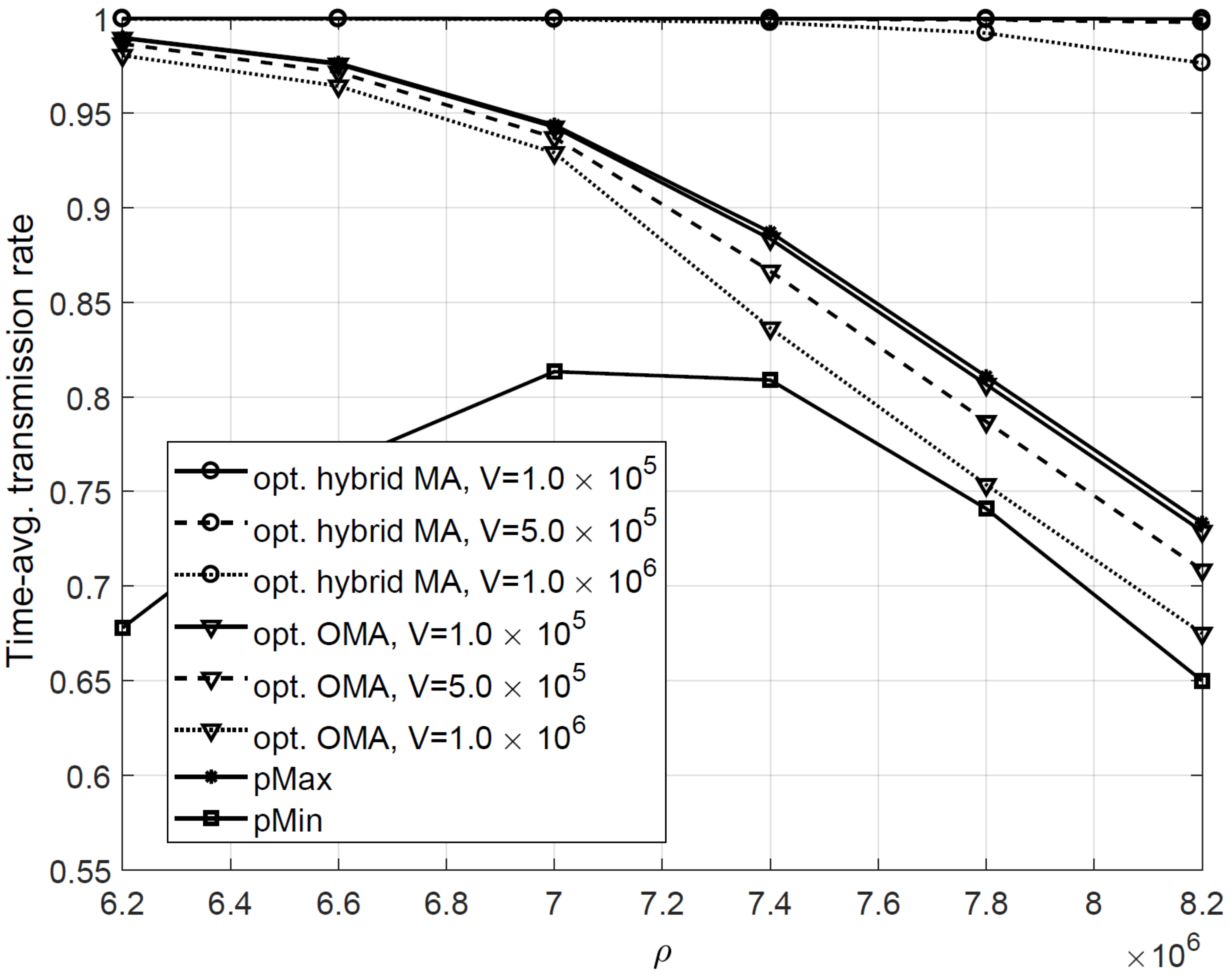}
	\caption{Low latency transmission rate vs. $\rho$} \label{fig:transrate_center_rholist}
	\endminipage
\end{figure}

Figs. \ref{fig:power_center_rholist} and \ref{fig:delay_center_rholist} show plots of time-average transmit power consumption for $N$ users and the expected maximum queueing delay among $N$ users versus $\rho$, i.e., outage threshold, respectively. 
In addition, the time-average data rates and low-latency transmission rates with different values of $\rho$ are shown in Figs. \ref{fig:datarate_center_rholist} and \ref{fig:transrate_center_rholist}, respectively. 
The low-latency transmission rate is the probability that a data packet arriving at the queue can be transmitted within a time slot of $D_{max}^q$.
The link outage occurs more frequently as $\rho$ increases.
Therefore, increasing $\rho$ basically makes power consumption and queueing delay grow, and time-average data rate and low-latency transmission rate decrease. 

However, some peculiar trends are observed in Figs. \ref{fig:power_center_rholist}-\ref{fig:transrate_center_rholist}.
First, some cases show decreasing power consumption as $\rho$ increases, and this results from frequent link deactivations due to link outage occurrences. 
Second, the performance trends of pMin are not monotonic in Figs. \ref{fig:delay_center_rholist}-\ref{fig:transrate_center_rholist}.
The reason is that the activated link of pMin always provides the data rate of $\rho$, because pMin consumes the minimum power to avoid the outage.
Therefore, the time-average data rate of pMin increases with small $\rho$, but decreases when $\rho$ is large due to frequent outage occurrences.

Among comparison techniques, we can notice that the proposed opt. hybrid MA provides high power efficiency while guaranteeing low queueing delay and sufficient time-average data rate.
opt. hybrid MA with $V=1.0 \times 10^6$ consumes almost the same power as pMin in Fig. \ref{fig:power_center_rholist}, as well as the queueing delay of opt. hybrid MA with $V=1.0 \times 10^5$ is the shortest among comparison schemes in Fig. \ref{fig:delay_center_rholist}.
Especially when $\rho$ is large, all the other schemes require maximum queueing delays larger than $D_{max}^q$, but opt. hybrid MA does not.
Therefore, opt. hybrid MA also shows the best low-latency transmission rates given in Fig. \ref{fig:transrate_center_rholist}.
In addition, opt. hybrid MA satisfies the QoS constraint as shown in Fig. \ref{fig:datarate_center_rholist}.

We can also find the advantages of NOMA over OMA by comparing opt. hybrid MA with opt. OMA. 
NOMA is well-known to improve throughput compared to OMA, with the same power consumption.
We can see that opt. hybrid MA gives better data rates with smaller power consumption than opt. OMA in Figs. \ref{fig:power_center_rholist} and \ref{fig:datarate_center_rholist}. 
Since the time-average data rate is sufficient to guarantee $\eta$, the system flexibly utilizes NOMA advantages over OMA in terms of both data rate and power efficiency.
Further, lower queueing delays and higher low-latency transmission rates are achieved by using NOMA with appropriate user pairings and power allocations.

The effects of system parameter $V$ can be also shown in Figs. \ref{fig:power_center_rholist}-\ref{fig:transrate_center_rholist}. 
As we explained earlier, $V$ is a weight factor for the term representing the transmit power in (\ref{eq:Lyapunov_metric}). 
As $V$ becomes larger, opt. hybrid MA and opt. OMA further pursue power efficiency rather than reducing backlogs of actual and virtual queues, $Q_n(t)$ and $Z_n(t)$ for all $n\in \mathcal{N}$, respectively.
Therefore, the time-average transmit power decreases with $V$. 
On the other hand, backlogs of actual and virtual queues grow as $V$ increases, 
so the queueing delay usually increases, the low latency transmission rate decreases and the time-average data rate decreases for opt. hybrid MA and opt. OMA.
Thus, the tradeoff between transmit power and queueing delay can be controlled by adjusting the system parameter $V$, depending on stochastic networks and QoS requirements.

\begin{figure}[t]
	\minipage{0.45\textwidth}
	\includegraphics[width=\linewidth]{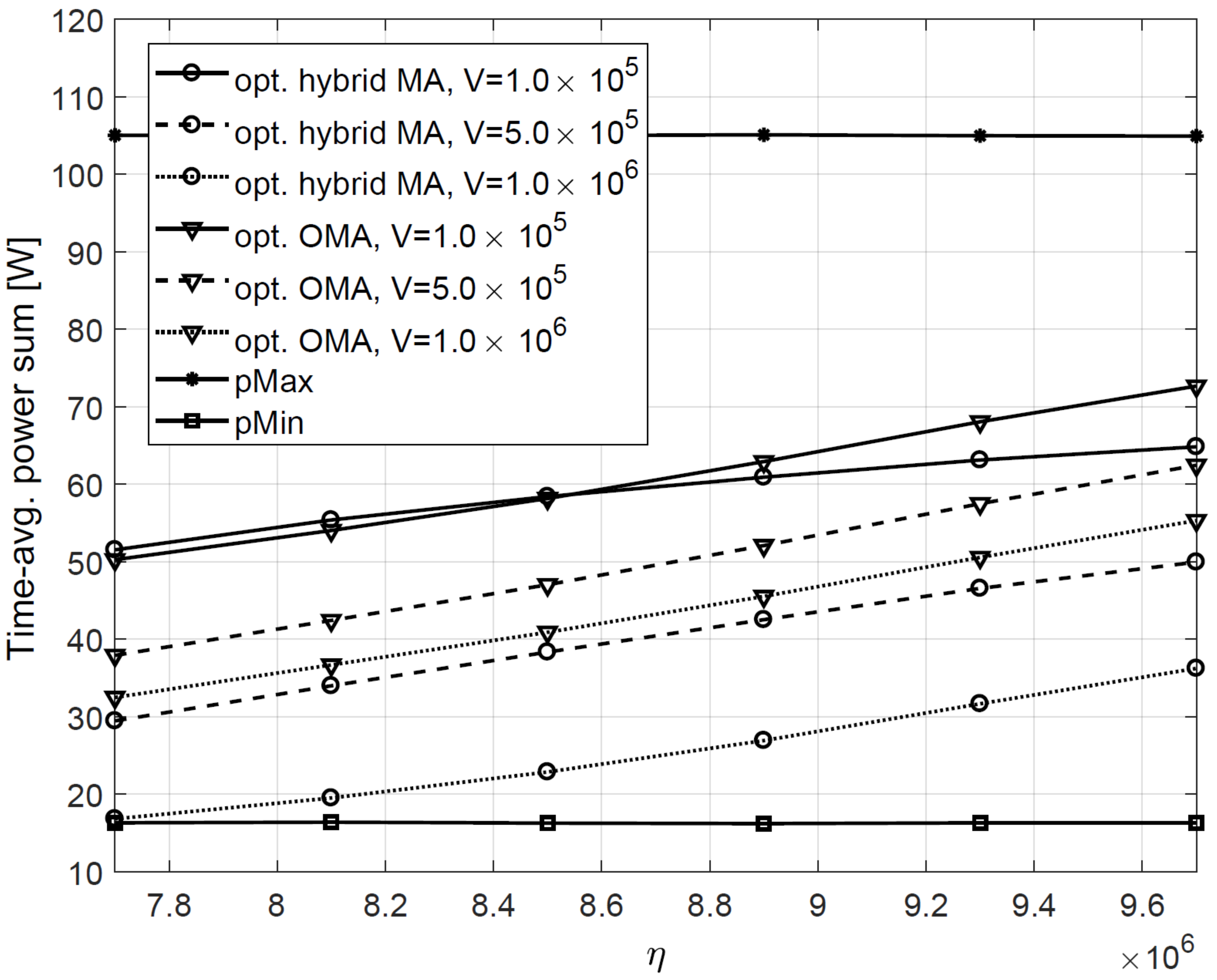}
	\caption{Time-avg. transmit power sum vs. $\eta$} \label{fig:power_center_nulist}
	\endminipage\hfill
	\minipage{0.45\textwidth}
	\includegraphics[width=\linewidth]{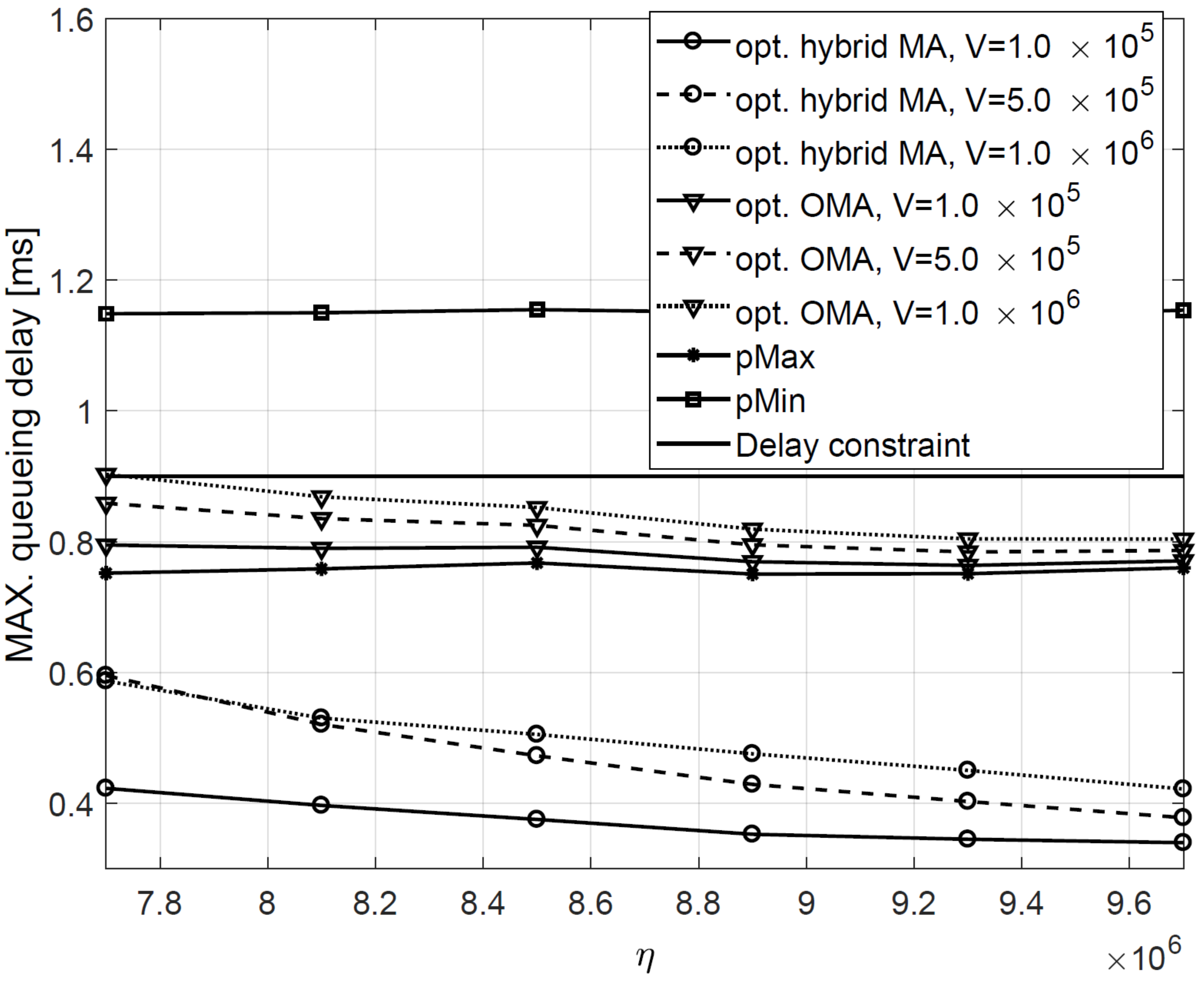}
	\caption{Maximal queueing delay vs. $\eta$} \label{fig:delay_center_nulist}
	\endminipage
\end{figure}

\begin{figure}[t]
	\minipage{0.45\textwidth}
	\includegraphics[width=\linewidth]{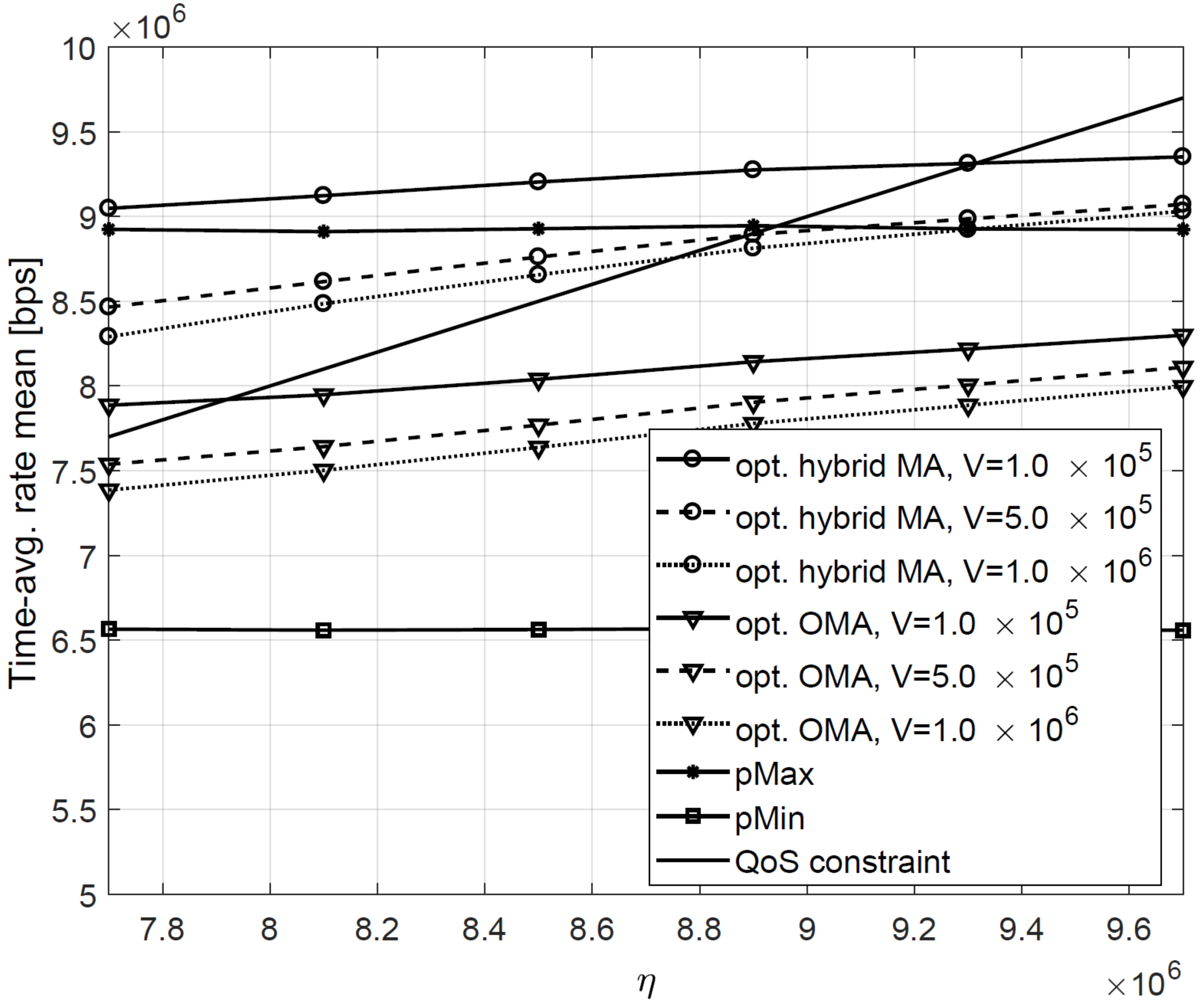}
	\caption{Time-avg. data rate vs. $\eta$} \label{fig:datarate_center_nulist}
	\endminipage\hfill
	\minipage{0.45\textwidth}
	\includegraphics[width=\linewidth]{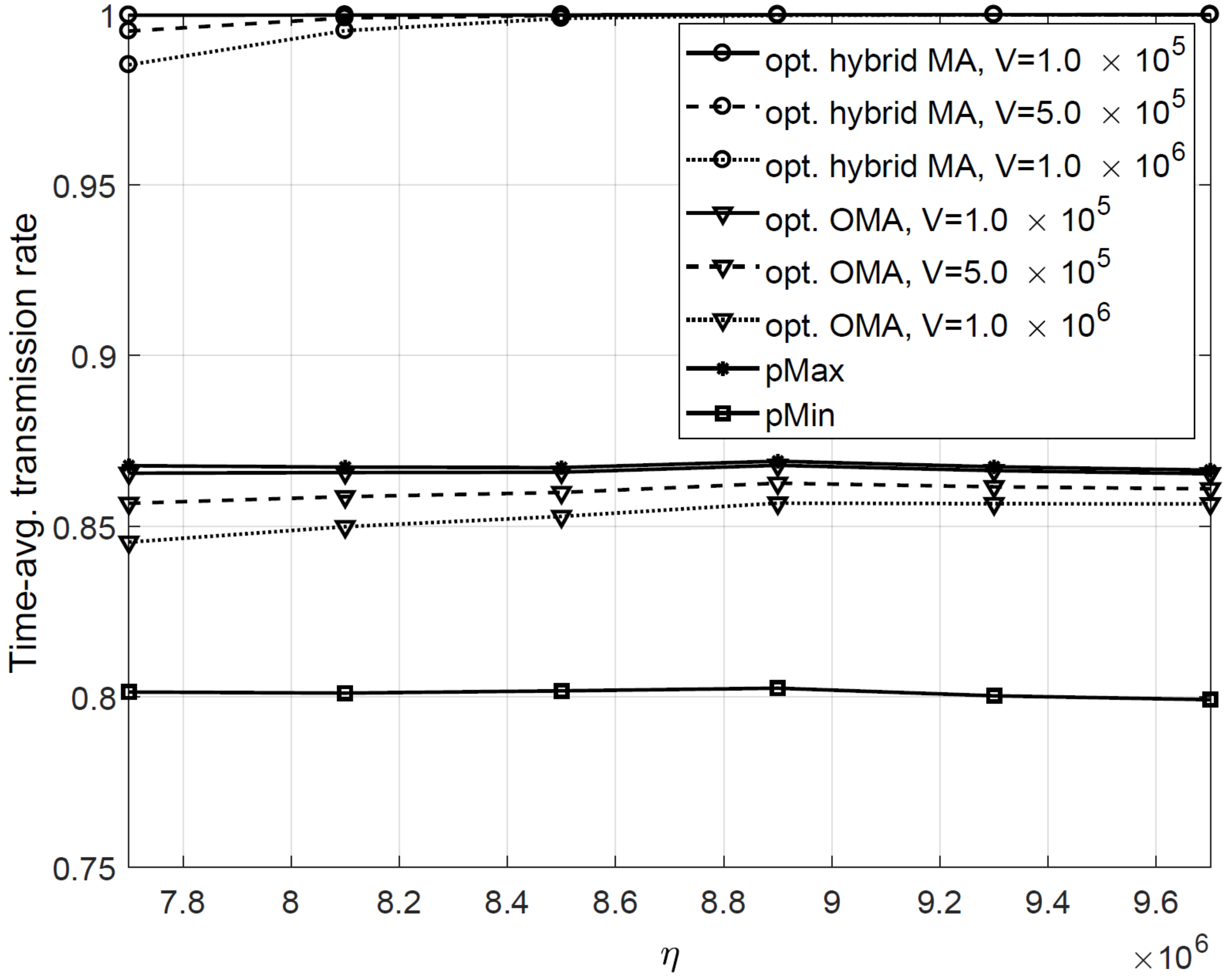}
	\caption{Low latency transmission rate vs. $\eta$} \label{fig:transrate_center_nulist}
	\endminipage
\end{figure}

Figs. \ref{fig:power_center_nulist}-\ref{fig:transrate_center_nulist} show plots of power consumption, queueing delay, time-average data rate and low latency transmission rate versus $\eta$, respectively.
The QoS constraint $\eta$ only affects performances of opt. hybrid MA and opt. OMA, because pMax and pMin do not consider the QoS constraint.
As $\eta$ grows, the transmit powers of opt. hybrid MA and opt. OMA obviously increase to guarantee $\eta$ as much as possible, so the time-average data rate increases for opt. hybrid MA and opt. OMA.
Since a larger power causes more departures, their queueing delays decrease also.
However, it becomes too difficult to satisfy QoS requirements when $\eta$ is large, as shown in Fig. \ref{fig:datarate_center_nulist}.
The encouraging point is that opt. hybrid MA can provide higher time-average data rates even with less power consumption than pMax.
In addition, comparing opt. hybrid MA with opt. OMA, we can see that NOMA advantages over OMA still remain for different values of $\eta$.
Similar to performance changes with $V$ in Figs. \ref{fig:power_center_rholist}-\ref{fig:transrate_center_rholist}, the time-average transmit power decreases as $V$ grows, whereas the queueing delay increases as shown in Figs. \ref{fig:power_center_nulist} and \ref{fig:delay_center_nulist}, respectively.

\section{Concluding Remarks}
\label{sec:conclusion}

This paper designs the joint optimization framework for power allocation and user pairing in hybrid MA system.
The optimization framework pursues both power efficiency and low latency while achieving sufficient time-average data rates for all users.
User pairings for NOMA signaling are performed based on the matching theory, and the closed-form optimal power allocations for OMA and NOMA users with a given policy of user pairing are derived.
The proposed algorithm dynamically performs user pairings for NOMA with the optimal power allocations to adjust backlogs in transmitter queues.
Based on the short frame structure which is designed for URLLCs in the 5G network, simulation results show that the proposed algorithm enables to achieve a E2E delay smaller than 1 ms, while guaranteeing high power-efficiency and sufficient time-average data rates.
The proposed dynamic power control and user pairing algorithm smooths out the tradeoff between power and delay, and can also control the tradeoff by adjusting system parameters.



\begin{IEEEbiography}[{\includegraphics[width=1in,height=1.25in,clip,keepaspectratio]{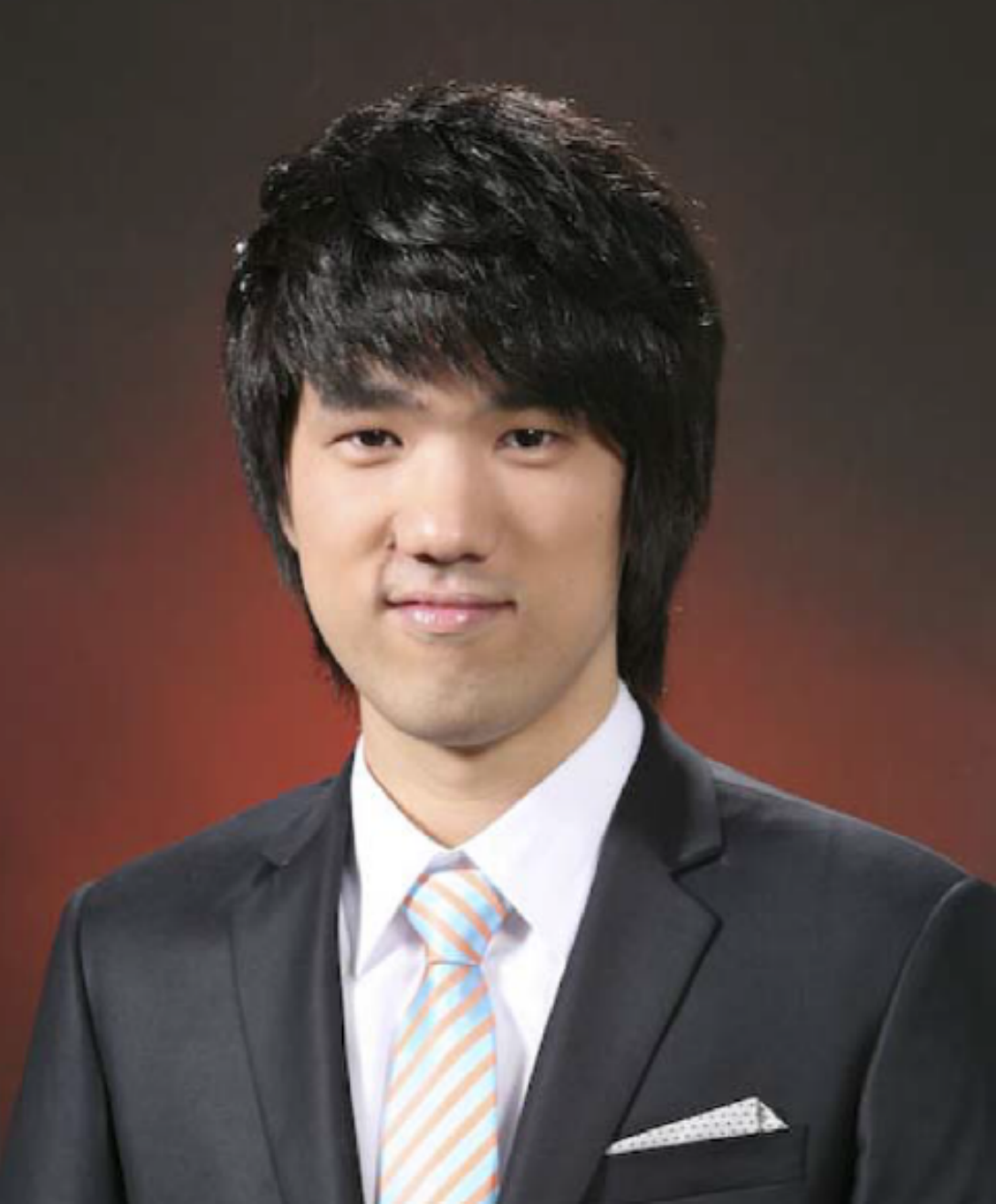}}]{Minseok Choi}
	received the B.S. and M.S. degree in electrical engineering from the Korea Advanced Institute of Science and Technology (KAIST), Daejeon, Korea, in 2016. He is currently pursuing the Ph.D degree in KAIST. His research interests include NOMA, wireless caching network, stochastic network optimization, and 5G network.\end{IEEEbiography}
\begin{IEEEbiography}[{\includegraphics[width=1in,height=1.25in,clip,keepaspectratio]{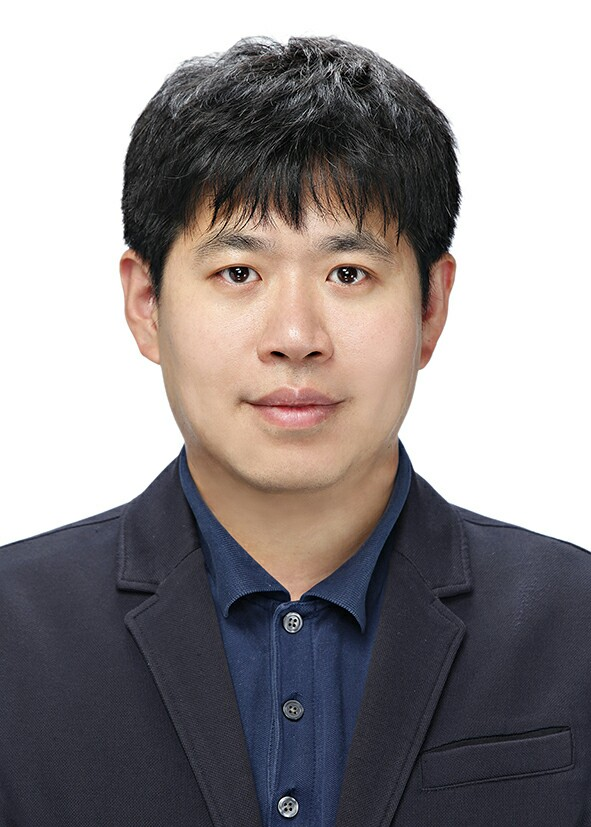}}]{Joongheon Kim}
	(M'06--SM'18) has been an assistant professor with Chung-Ang University, Seoul, Korea, since 2016. He received his B.S. (2004) and M.S. (2006) in computer science and engineering from Korea University, Seoul, Korea; and his Ph.D. (2014) in computer science from the University of Southern California (USC), Los Angeles, CA, USA. In industry, he was with LG Electronics Seocho R\&D Campus (Seoul, Korea, 2006--2009), InterDigital (San Diego, CA, USA, 2012), and Intel Corporation (Santa Clara, CA, USA, 2013--2016). 
	
	He is a senior member of the IEEE; and a member of IEEE Communications Society. He was awarded Annenberg Graduate Fellowship with his Ph.D. admission from USC (2009).\end{IEEEbiography}
\begin{IEEEbiography}[{\includegraphics[width=1in,height=1.25in,clip,keepaspectratio]{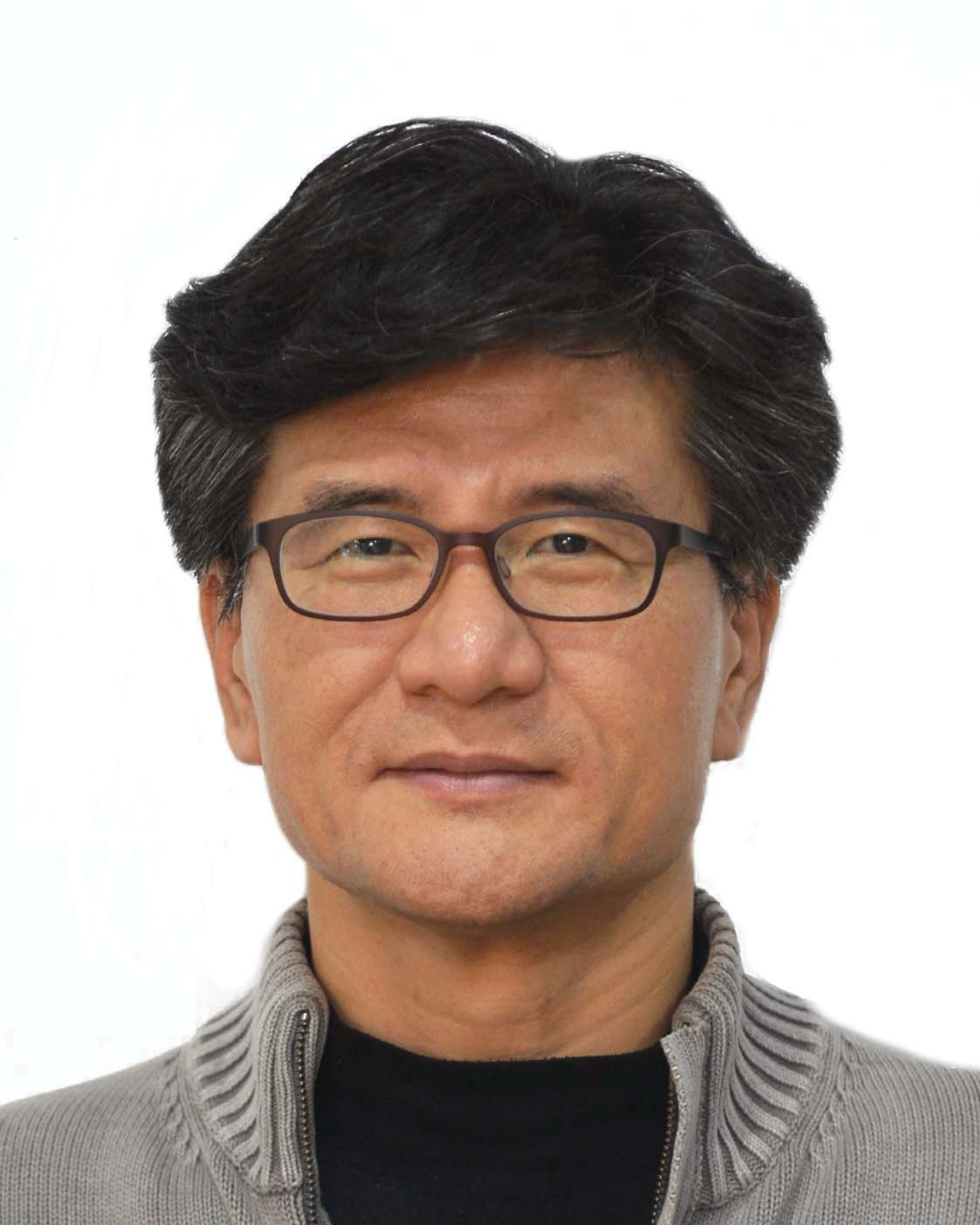}}]{Jaekyun Moon}
	received the Ph.D degree in electrical and computer engineering at Carnegie Mellon University, Pittsburgh, Pa, USA. He is currently a Professor of electrical enegineering at KAIST. From 1990 through early 2009, he was with the faculty of the Department of Electrical and Computer Engineering at the University of Minnesota, Twin Cities. He consulted as Chief Scientist for DSPG, Inc. from 2004 to 2007. He also worked as Chief Technology Officier at Link-A-Media Devices Corporation. His research interests are in the area of channel characterization, signal processing and coding for data storage and digital communication. Prof. Moon received the McKnight Land-Grant Professorship from the University of Minnesota. He received the IBM Faculty Development Awards as well as the IBM Partnership Awards. He was awarded the National Storage Industry Consortium (NSIC) Technical Achievement Award for the invention of the maximum transition run (MTR) code, a widely used error-control/modulation code in commercial storage systems. He served as Program Chair for the 1997 IEEE Magnetic Recording Conference. He is also Past Chair of the Signal Processing for Storage Technical Committee of the IEEE Communications Society. He served as a guest editor for the 2001 IEEE JSAC issue on Signal Processing for High Density Recording. He also served as an Editor for IEEE TRANSACTIONS ON MAGNETICS in the area of signal processing and coding for 2001-2006. He is an IEEE Fellow.
\end{IEEEbiography}\vfill




\end{document}